\newtheorem{theorem}{Theorem}
\newtheorem{lemma}[theorem]{Lemma}
\newtheorem{proposition}[theorem]{Proposition}
\newcommand{\qed}{\hspace*{\fill}\ensuremath{\Box}}
\newenvironment{proof}{\pagebreak[3]\noindent\textbf{Proof.}}{\qed\pagebreak[3]\medskip}
\setlist{itemsep=1pt,parsep=0pt,topsep=2pt}
\renewcommand{\geq}{\geqslant}
\renewcommand{\leq}{\leqslant}
\renewcommand{\ge}{\geq}
\renewcommand{\le}{\leq}
\newcommand{\autA}{\mathcal{A}}
\newcommand{\autF}{\mathcal{F}}
\newcommand{\fullaut}[3]{\autF({#1}, {#2}, {#3})}
\newcommand{\bigO}[1]{\ensuremath{\mathcal{O}({#1})}}
\newcommand{\union}{\mathbin{\cup}}
\newcommand{\bigunion}{\mathop{\bigcup}}
\newcommand{\intersect}{\mathbin{\cap}}
\newcommand{\NFA}{\ensuremath{\textrm{NFA}}}
\newcommand{\NBA}{\ensuremath{\textrm{BA}}}
\newcommand{\LNFA}{\ensuremath{L_{\NFA}}}
\newcommand{\LBA}{\ensuremath{L_{\NBA}}}
\newcommand{\abs}[1]{\ensuremath\left|#1\right|}
\newcommand{\set}[2]{\ensuremath{\left\{#1 \mid #2\right\}}}
\newcommand{\os}[1]{\ensuremath{\left\{#1\right\}}}
\newcommand{\gR}{\ensuremath{\mathcal{R}}}
\newcommand{\gL}{\ensuremath{\mathcal{L}}}
\newcommand{\gH}{\ensuremath{\mathcal{H}}}
\newcommand{\gJ}{\ensuremath{\mathcal{J}}}
\newcommand{\Req}{\mathrel{\gR}}
\newcommand{\Leq}{\mathrel{\gL}}
\newcommand{\Heq}{\mathrel{\gH}}
\newcommand{\Jeq}{\mathrel{\gJ}}
\newcommand{\Synt}[1]{{{#1}^+} / {\equiv_L}}
\newcommand{\ie}{i.e.,~}
\newcommand{\eg}{e.g.~}
\newcommand{\citenew}{[\textbf{new}]}
\newcommand{\ms}{\hspace*{0.5pt}}
\let\oldpar\paragraph
\renewcommand{\paragraph}[1]{\oldpar*{\bf #1}}
\title{Operations on Weakly Recognizing Morphisms}
\author{Lukas Fleischer \and Manfred Kuf\-leitner}
\date{FMI, University of Stuttgart\thanks{This work was supported by the DFG grants DI 435/5-2 and \mbox{KU 2716/1-1}.}\\[.1mm]
  \normalsize\texttt{\{fleischer,kufleitner\}@fmi.uni-stuttgart.de}}
\begin{document}

\maketitle

\vspace{-8mm}
\begin{abstract}
  \noindent
  {\sffamily\normalsize\bfseries{Abstract.}} \ 
  Weakly recognizing morphisms from free semigroups onto finite semigroups are
  a classical way for defining the class of $\omega$-regular languages, \ie a
  set of infinite words is weakly recognizable by such a morphism if and only
  if it is accepted by some B\"uchi automaton. We consider the descriptional
  complexity of various constructions for weakly recognizing morphisms. This
  includes the conversion from and to B\"uchi automata, the conversion into
  strongly recognizing morphisms, and  complementation. For some problems, we
  are able to give more precise bounds in the case of binary alphabets or
  simple semigroups.
\end{abstract}

\section{Introduction}

B\"uchi automata define the class of $\omega$-regular languages.  They were
introduced by B\"uchi for deciding the monadic second-order theory of
$(\mathbb{N},<)$~\cite{Buc60}. Since then, $\omega$-regular languages have
become an important tool in formal verification, and many other automata models
for this language class have been considered; see
\eg\cite{pp04:short,tho90handbook:short}. Each automaton model has its merits
and its disadvantages. Recently, the authors have shown that recognizing
morphisms have many nice algorithmic
properties~\cite{FleischerKufleitner2015fsttcs:short}. Such morphisms come in
two different flavors. Strongly recognizing morphisms admit efficient
minimization and complementation, whereas weakly recognizing morphisms can be
exponentially more succinct (but there is no minimal weak recognizer and there
is no efficient complementation). The situation is similar to the behavior of
deterministic and nondeterministic finite automata. The major difference to
both nondeterministic finite automata and B\"uchi automata is that there is an
efficient inclusion test for weakly recognizing
morphisms~\cite{FleischerKufleitner2015fsttcs:short}. Every strongly recognizing morphism is also weakly recognizing, but the converse is false.

In this paper, we consider the descriptional complexity of various operations
on weakly recognizing morphisms and conversions involving nondeterministic
Büchi automata (\NBA) and strongly recognizing morphisms.
In each case, we give asymptotically tight bounds. For the conversion of a
$\NBA$ into a weakly recognizing morphism, we give a lower bound which matches
the naive upper bound. Our results are summarized in Table~\ref{tab:fragments}.

There are some similarities between recognizing morphisms over finite and over
infinite words. Strong recognition is the natural counterpart to recognition
for finite words. Nevertheless, in order to prove lower bounds for the
conversion of B\"uchi automata to weakly recognizing morphisms, we first show
that bounds for converting nondeterministic finite automata to recognizing
morphisms over finite words (with some limitations) also hold for the
conversion of Büchi automata to weakly recognizing morphisms. We then use
techniques of Sakoda and Sipser~\cite{SakodaSipser78stoc:short} and of
Yan~\cite{Yan08lmcs} to obtain tight bounds for the conversion of
nondeterministic finite automata to recognizing morphisms. This step is similar
to the work of Holzer and König~\cite{HolzerKoenig2004tcs}. To the best of our
knowledge, our lower bound over finite words for the conversion of an NFA into
a recognizing morphism is also a new result.

\renewcommand{\arraystretch}{1.2}
\begin{table}[t]\label{tab:fragments}
  \centering\small
  \begin{tabular}{l@{\hspace*{5mm}}r@{\hspace*{2.5mm}}r}
    \toprule
    \textbf{Operation} & \textbf{Lower bound} & \ \ \textbf{Upper bound} \\
    \midrule
    $\NBA$ to weak recognition & $2^{n^2}$~\citenew & $2^{n^2}$~\cite{pecu86stacs:short} \\
    $\NBA$ to weak recognition, binary alphabet
     & $2^{{(n-1)}^2/4}$~\citenew & $2^{n^2}$~\cite{pecu86stacs:short} \\
    Weak recognition to $\NBA$ & \hspace*{-4mm}$(n-3)(n+1)/32$~\citenew & $n(n + 1)$~\cite{pecu86stacs:short} \\
    Weak recognition to strong recognition & $n \ms 2^{n-1}$~\citenew & $2^{n^2}$~\cite{pp04:short} \\
    Complementation of weak recognition & $n \ms 2^{n-1}$~\citenew & $2^{n^2}$~\cite{pp04:short} \\
    Complementation for simple semigroups & $n \ms 2^{n-1}$~\citenew & $n \ms 2^n$~\citenew \\
    \bottomrule
  \end{tabular}
  \smallskip
  \caption{Bounds for the descriptional complexity of various operations.}
\end{table}

\section{Preliminaries}

This section gives a brief overview of some basic definitions from the fields of
formal languages, finite automata and semigroup theory. We refer to~\cite{pp04:short,pin86:short} for more detailed introductions.

\paragraph{Words.}

Let $A$ be a finite \emph{alphabet}. The elements of $A$ are called
\emph{letters}.
A \emph{finite word} is a sequence $a_1 a_2 \cdots a_n$ of letters of $A$ and
an \emph{infinite word} is an infinite sequence $a_1 a_2 \cdots$. The empty
word is denoted by $\varepsilon$.
Given an infinite word $\alpha = a_1 a_2 \cdots$, we let $\inf(\alpha) \subseteq A$ denote the set of letters in $\alpha$ which occur infinitely often.

Let $K$ be a set of finite words and let $L$ be a set of infinite words. We set
$KL = \set{u\alpha}{u \in K, \alpha \in L}$, $K^n = \set{u_1 u_2 \cdots
u_n}{u_i \in K}$, $K^+ = \bigunion_{n \ge 1} K^n$ and $K^* = K^+ \union
\os{\varepsilon}$.
Moreover, if $\varepsilon \not\in K$ we define the \emph{infinite iteration}
$K^\omega = \set{u_1 u_2 \cdots}{u_i \in K}$. A natural extension to $K
\subseteq A^*$ is $K^\omega = {(K \setminus \os{\varepsilon})}^\omega \union
\os{\varepsilon}$.

\paragraph{Automata.}

A \emph{finite automaton} is a 5-tuple $\autA = (Q, A, \delta, I, F)$ where
$Q$ is a finite set of \emph{states} and $A$ is a finite alphabet. The
\emph{transition relation} $\delta$ is a subset of $Q \times A \times Q$ and
its elements are called \emph{transitions}. The sets $I$ and $F$ are subsets of
$Q$ and are called \emph{initial states} and \emph{final states}, respectively.

A \emph{finite run} of a word $a_1 a_2 \cdots a_n$ on $\autA$ is a sequence
$q_0 a_1 q_1 a_1 \cdots q_{n-1} a_n q_n$ such that $q_0 \in I$ and $(q_i,
a_{i+1}, q_{i+1}) \in \delta$ for all $i \in \os{0, \dots, n-1}$. The run is
said to \emph{start} in $q_0$ and \emph{end} in $q_n$. The word $a_1 a_2 \cdots
a_n$ is the \emph{label} of the run.
A finite run is called \emph{accepting} if it ends in a final state.
A finite word $u$ is said to be \emph{accepted by $\autA$} if there exists an
accepting finite run of $u$ on $\autA$ and the language \emph{accepted by
$\autA$} is the set of all finite words over $A^*$ accepted by $\autA$. It is
denoted by $\LNFA(\autA)$.

Analogously, an \emph{infinite run} of a word $a_1 a_2 \cdots$ on $\autA$ is an
infinite sequence $q_0 a_1 q_1 a_1 \cdots$ such that $q_0 \in I$ and $(q_i,
a_{i+1}, q_{i+1}) \in \delta$ for all $i \ge 0$.
It is called \emph{accepting} if $\inf(q_0 q_1 q_2 \cdots) \intersect F \ne
\emptyset$.
An infinite word $\alpha$ is said to be \emph{Büchi-accepted by $\autA$} if
there exists an accepting infinite run of $\alpha$ on $\autA$.
The language \emph{Büchi-accepted by $\autA$} is the set of all infinite
words Büchi-accepted by $\autA$ and it is denoted by $\LBA(\autA)$.

We use the term \emph{run} for both finite and infinite runs if the reference
is clear from the context.
A language $L \subseteq A^*$ (resp.\ $L \subseteq A^\omega$) is \emph{regular}
(resp.\ \emph{$\omega$-regular}) if it is accepted (resp.\ Büchi-accepted) by
some finite automaton.

\paragraph{Finite semigroups.}

A \emph{semigroup morphism} is a mapping $h \colon S \to T$ between two (not
necessarily finite) semigroups $S$ and $T$ such that $h(s)h(t) = h(st)$ for all
$s, t \in S$. Since we do not consider morphisms of other objects, we use the
term \emph{morphism} synonymously.
A \emph{subsemigroup} of a semigroup $S$ is a subset that is closed under
multiplication.
We say that a semigroup $T$ \emph{divides} a semigroup $S$ if there exists a
surjective morphism from a subsemigroup of $S$ onto $T$.

\emph{Green's relations} are an important tool in the study of semigroups.
For the remainder of this subsection, let $S$ be a finite semigroup.
We let $S^1$ denote the monoid that is obtained by adding a new neutral element
$1$ to $S$.
For $s, t \in S$ let
\begin{ceqn}
  \begin{align*}
    s & \Req t \text{~if there exist~} q, q' \in S^1 \text{~such that~} sq = t \text{~and~} tq' = s, \\
    s & \Leq t \text{~if there exist~} p, p' \in S^1 \text{~such that~} ps = t \text{~and~} p't = s, \\
    s & \Jeq t \text{~if there exist~} p, q, p', q' \in S^1 \text{~such that~} psq = t \text{~and~} p'tq' = s, \\
    s & \Heq t \text{~if~} s \Req t \text{~and~} s \Leq t.
  \end{align*}
\end{ceqn}
These relations are equivalence relations. The equivalence classes of $\Req$
(resp.~$\Leq$, $\Jeq$, $\Heq$) are called \emph{$\gR$-classes}
(resp.~\emph{$\gL$-classes}, \emph{$\gJ$-classes}, \emph{$\gH$-classes}).
For $s \in S$, we denote the $\gR$-class (resp.~$\gL$-class) of $s$ by $R_s$
(resp.~$L_s$) and we let ${S} / {\gR} = \set{R_s}{s \in S}$ as well as ${S} /
{\gL} = \set{L_s}{s \in S}$.

A semigroup is called \emph{$\gJ$-trivial} if each of its $\gJ$-classes
contains exactly one element. A semigroup is called \emph{simple} if it
consists of a single $\gJ$-class. In a finite simple semigroup, the relations
$s \Req st \Leq t$ hold for all $s, t \in S$. Moreover, each $\gH$-class forms
a group and all such groups are isomorphic~\cite{pin86:short}. We will also utilize
the following lemma:
\begin{lemma}\label{lem:inverse}
  Let $S$ be a finite simple semigroup and let $x, y, z \in S$ such that $y
  \Req z$.  Then $xy = xz$ implies $y = z$.
\end{lemma}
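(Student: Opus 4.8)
The plan is to first upgrade the hypothesis from $\mathcal{R}$-equivalence to $\mathcal{H}$-equivalence, and then to cancel inside a group. The only structural facts I will use are the two recalled just above the lemma: that $s \Req st \Leq t$ for all $s,t$ in a finite simple semigroup, and that every $\mathcal{H}$-class is a group.

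First I would observe that the equation $xy = xz$ by itself already forces $y \Leq z$. Indeed, applying $st \Leq t$ to the pair $(x,y)$ gives $xy \Leq y$, and likewise $xz \Leq z$; since $xy = xz$, these chain to $y \Leq xy = xz \Leq z$ using that $\Leq$ is an equivalence relation. Combining $y \Leq z$ with the hypothesis $y \Req z$ yields $y \Heq z$, so $y$ and $z$ lie in a common $\mathcal{H}$-class $H$, which by the second fact is a group. Writing $e$ for its identity idempotent, we have $ey = y$ and $ez = z$.

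The difficulty is that $x$ itself need not lie in $H$, so it cannot be cancelled directly; a priori $xy$ and $xz$ sit in the $\mathcal{L}$-class of $y$ but outside the group $H$. The key step is therefore to transport the equation into $H$ by left-multiplying with $e$. I would show that $g := exe$ belongs to $H$: two applications of $s \Req st$ give $e \Req ex \Req exe$, while $st \Leq t$ gives $exe \Leq e$, so $exe \Heq e$ and hence $g \in H$. Since $ey = y$, a one-line computation gives $exy = (exe)y = gy$ and, symmetrically, $exz = gz$. Left-multiplying $xy = xz$ by $e$ now reads $gy = gz$ with $g, y, z$ all in the group $H$, and cancelling $g$ by its inverse in $H$ gives $ey = ez$, that is, $y = z$.

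I expect the main obstacle to be the idea that the equation must be pulled into the group via the idempotent $e$, together with the verification that $exe \in H$; once the equation lives inside $H$, the cancellation is routine group arithmetic. As a sanity check on where the hypothesis enters, the Rees matrix description of a finite simple semigroup shows that dropping $y \Req z$ permits $xy = xz$ with $y \ne z$, so the $\mathcal{R}$-relation is genuinely needed precisely at the passage from $y \Leq z$ to $y \Heq z$.
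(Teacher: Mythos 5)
Your proof is correct, but it takes a genuinely different route from the paper's. The paper's argument is a direct three-line computation: simplicity gives $y \Leq xy$, hence $pxy = y$ for some $p \in S^1$; the hypothesis $y \Req z$ gives $yq = z$ for some $q \in S^1$; and then $y = pxy = pxz = pxyq = yq = z$. Notably, this never invokes the group structure of $\gH$-classes at all --- only the defining translations of Green's relations and the single simplicity fact $y \Leq xy$. Your argument instead localizes the problem: you upgrade $y \Req z$ to $y \Heq z$ (correctly, via $y \Leq xy = xz \Leq z$, using that $\Leq$ as defined here is an equivalence), verify that the sandwich element $g = exe$ lies in the group $\gH$-class $H$ containing $y$ and $z$ (the chain $e \Req ex \Req exe$ together with $exe \Leq e$ is sound), transport the equation into $H$ via $exy = ex(ey) = (exe)y = gy$ and likewise $exz = gz$, and cancel $g$ by its group inverse. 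Every step checks out, and your closing sanity check is also right: in a left-zero semigroup one has $xy = x = xz$ with $y \ne z$ and $y \not\Req z$, so the $\gR$-hypothesis is genuinely needed exactly where you say it is. What your approach buys is structural insight --- it exhibits the lemma as ordinary group cancellation after a Rees-style localization at the idempotent $e$, and it makes visible why the hypothesis $y \Req z$ is what pins $y$ and $z$ into a common group. What the paper's approach buys is economy: it needs neither the idempotent, nor the theorem that $\gH$-classes are groups, nor the intermediate observation $y \Leq z$, and it generalizes more readily to situations where only the translation elements $p$ and $q$ are available.
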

\begin{proof}
  Suppose that $xy = xz$.
  Since $S$ is simple, we have $y \Leq xy$ and thus, there exists an element $p
  \in S^1$ such that $pxy = y$. Since $y \Req z$, there exists an element $q
  \in S^1$ with $yq = z$. It follows that $y = pxy = pxz = pxyq = yq = z$.
\end{proof}

\paragraph{Recognition by morphisms.}

Let $h \colon A^+ \to S$ be a morphism to a finite semigroup $S$.
A pair $(s,e)$ of elements of $S$ is a \emph{linked pair} if $se = s$ and $e^2
= e$.
For $s \in S$, we set ${[s]}_h = h^{-1}(s)$ and if $h$ is understood from the
context, we may skip the reference to the morphism in the subscript.
A language $L \subseteq A^+$ is \emph{recognized} by a morphism $h: A^+ \to S$
if $L$ is a union of sets $[s_i]$ with $s_i \in S$.
A language $L \subseteq A^\omega$ is \emph{weakly recognized} by a morphism
$h: A^+ \to S$ if it is a union of sets $[s_i]{[e_i]}^\omega$ where $(s_i,
e_i)$ are linked pairs of $S$.
A language $L \subseteq A^\omega$ is \emph{strongly recognized} by a morphism
$h: A^+ \to S$ if $[s]{[t]}^\omega \intersect L \ne \emptyset$ implies
$[s]{[t]}^\omega \subseteq L$ for all $s, t \in S$.
It is easy to see that strong recognition implies weak recognition,
see~\eg\cite[Theorem~2.2]{pp04:short}.
Moreover, if a morphism strongly recognizes $L$, it also strongly recognizes
its complement $A^\omega \setminus L$.
By extension, we also say that a semigroup $S$ recognizes (resp.\ weakly
recognizes, strongly recognizes) a language $L$ if there exists a morphism $h
\colon A^+ \to L$ that recognizes (resp.\ weakly recognizes, strongly
recognizes) $L$.

For a language $L \subseteq A^+ \union A^\omega$, we have $u \equiv_L v$ if and
only if
\begin{ceqn}
  \begin{align*}
    (xuy)z^\omega \in L & \Leftrightarrow (xvy)z^\omega \in L \text{~and~} \\
    z{(xuy)}^\omega \in L & \Leftrightarrow z{(xvy)}^\omega \in L
  \end{align*}
\end{ceqn}
for all finite words $x, y, z \in A^*$. Keep in mind that $\varepsilon^\omega =
\varepsilon$. The relation $\equiv_L$ was introduced by Arnold~\cite{arn85}; it
is called the \emph{syntactic congruence} of $L$.
The congruence classes of $\equiv_L$ form the so-called \emph{syntactic
semigroup} $\Synt{A}$ and the \emph{syntactic morphism} $h_L \colon A^+ \to
\Synt{A}$ is the natural quotient map.
If $L \subseteq A^*$ (resp.\ $L \subseteq A^\omega$) is regular (resp.\
$\omega$-regular), the syntactic semigroup of $L$ is finite and $h_L$
recognizes (resp.\ strongly recognizes) the language $L$;
see~\cite{arn85,pp04:short}.

\section{Lower Bound Techniques}

\subsection{Proving Lower Bounds for Weakly Recognizing Morphisms}

We first consider the general problem of proving lower bounds for the size of
weakly recognizing semigroups for a given language $L$. In the case of
recognizing morphisms over finite words and in the case of strongly recognizing
morphisms, this is easy since one only needs to compute the syntactic
semigroup, which immediately yields a tight lower bound. On the contrary,
weakly recognizing morphisms do not admit minimal objects. However, it turns
out that one can still use a relaxed version of Arnold's syntactic congruence.

We first prove a combinatorial lemma and then give the main result of this
section.

\begin{lemma}\label{lem:technical}
  Let $u, v \in A^+$ and let $(s, e)$ be a linked pair. Then $uv^\omega$ is
  contained in $[s]{[e]}^\omega$ if and only if there exists a factorization $v
  = v_1 v_2$ and powers $k, \ell \ge 0$ such that $\ell$ is odd, $h(u v^k v_1)
  = s$ and $h(v_2 v^\ell v_1) = e$.
\end{lemma}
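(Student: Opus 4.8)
The plan is to prove both implications directly, translating membership in ${[s]}{[e]}^\omega$ into a factorization of the ultimately periodic word $uv^\omega$ that respects the period $v$. Throughout I write $n = \abs{u}$ and $m = \abs{v}$, and I use that
\[
  {[s]}{[e]}^\omega = \set{w_0 w_1 w_2 \cdots}{h(w_0) = s \text{ and } h(w_i) = e \text{ for all } i \ge 1},
\]
where every factor $w_i$ is nonempty. For the direction from right to left I would simply exhibit such a factorization: setting $w_0 = u v^k v_1$ and $w_i = v_2 v^\ell v_1$ for $i \ge 1$, the telescoping identity $v_1 \ms {(v_2 v^\ell v_1)}^\omega = {(v v^\ell)}^\omega = {(v^{\ell+1})}^\omega = v^\omega$ gives $w_0 w_1 w_2 \cdots = u v^k v^\omega = u v^\omega$. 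Since $h(w_0) = s$ and $h(w_i) = e$, this witnesses $uv^\omega \in {[s]}{[e]}^\omega$. Note that this direction needs neither $se = s$ nor the parity of $\ell$; it only uses that each $w_i$ is nonempty, which holds because $\ell \ge 1$.

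The direction from left to right is the substantive one. Suppose $uv^\omega = w_0 w_1 w_2 \cdots$ with $h(w_0) = s$ and $h(w_i) = e$ for $i \ge 1$, and let $p_i = \abs{w_0 w_1 \cdots w_i}$ for $i \ge 0$ be the induced cut positions; these are strictly increasing and unbounded. Since the tail of $uv^\omega$ after position $n$ is periodic with period $m$, I would reduce the cut positions modulo the period: among the infinitely many $p_i$ with $p_i > n$, some residue $r \in \os{0, \dots, m-1}$ of $p_i - n$ modulo $m$ occurs infinitely often, by the pigeonhole principle. Fixing such an $r$ and factoring $v = v_1 v_2$ with $\abs{v_1} = r$, each such aligned cut position lies exactly at the end of an occurrence of $v_1$ in the periodic tail. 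Writing $q_0 < q_1 < q_2 < \cdots$ for the aligned cut positions, the prefix ending at $q_0$ equals $u v^k v_1$ with $k = (q_0 - n - r)/m \ge 0$, and as a product of factors $w_0 w_1 \cdots$ it maps to $s \ms e^j = s$ using $se = s$ and $e^2 = e$; this yields $h(u v^k v_1) = s$. For any $a < b$, the infix from $q_a$ to $q_b$ equals $v_2 v^t v_1$ for some $t \ge 0$, and being a nonempty product of factors each mapping to $e$, it satisfies $h(v_2 v^t v_1) = e$.

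It remains to force the exponent to be odd, and the choice of $\ell$ is independent of the choice of $k$ made above. Writing $\ell_{a,b} = (q_b - q_a)/m - 1$ for the exponent contributed by the infix from $q_a$ to $q_b$, I observe the additivity $\ell_{0,2} = \ell_{0,1} + \ell_{1,2} + 1$. Hence at least one of $\ell_{0,1}$, $\ell_{1,2}$, $\ell_{0,2}$ is odd: if neither $\ell_{0,1}$ nor $\ell_{1,2}$ is odd, then $\ell_{0,2}$ is. Choosing $a, b \in \os{0, 1, 2}$ so that $\ell := \ell_{a,b}$ is odd gives $h(v_2 v^\ell v_1) = e$ with $\ell$ odd, completing the desired factorization. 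The main obstacle is precisely this left-to-right direction: the key insight is that only cut positions sharing a common phase modulo the period can be read off as a clean power of $v$, and since a single aligned block need not have odd exponent, the parity requirement then forces the small combinatorial step above.
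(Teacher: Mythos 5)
Your proof is correct, and its overall skeleton coincides with the paper's: both arguments take an arbitrary factorization $uv^\omega = w_0 w_1 w_2 \cdots$ witnessing membership in $[s]{[e]}^\omega$ and align the cut positions with the period of $v^\omega$ — the paper does this implicitly (``since $u$ and $v$ are finite words, there exist indices $j > i \ge 1$, powers $k, \ell$ and a position $m$ \dots''), while you make the pigeonhole on residues of $p_i - n$ modulo $\abs{v}$ explicit; this fixes the factorization $v = v_1 v_2$ and gives $h(uv^k v_1) = se^j = s$ and $h(v_2 v^t v_1) = e$ between aligned cuts, exactly as in the paper. Where you genuinely diverge is the parity step. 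The paper repairs an even exponent algebraically: since $v = v_1 v_2$, one has $v_2 v^{2\ell+1} v_1 = (v_2 v^\ell v_1)(v_2 v^\ell v_1)$, so $h(v_2 v^{2\ell+1} v_1) = e^2 = e$ with $2\ell+1$ odd; this needs only a single aligned pair of cuts. You instead take three aligned cuts $q_0 < q_1 < q_2$ and exploit the additivity $\ell_{0,2} = \ell_{0,1} + \ell_{1,2} + 1$ to conclude that at least one of the three exponents is odd — a purely combinatorial fix that never modifies the witnessing block, at the negligible cost of needing three aligned cuts instead of two (available, since your residue class occurs infinitely often). Both repairs ultimately rest on the same algebraic facts $se = s$ and $e^2 = e$; the paper's doubling trick is slightly more economical, while yours is arguably more transparent in that the odd exponent is realized by an actual pair of cuts in the given run rather than by an artificially concatenated block. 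Your right-to-left direction, with the telescoping identity $v_1{(v_2 v^\ell v_1)}^\omega = v^\omega$ spelled out, matches what the paper dismisses as trivial, and your side remarks (independence of the choices of $k$ and $\ell$, nonemptiness of all factors) are accurate.
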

\begin{proof}
  Let $v = a_1 a_2 \cdots a_n$ with $n \ge 1$ and $a_i \in A$. If $uv^\omega$
  is contained in $[s]{[e]}^\omega$, there exists a factorization $uv^\omega =
  u' v_1' v_2' \cdots$ such that $h(u') = s$ and $h(v_i') = e$ for all $i \ge
  1$.
  Since $u$ and $v$ are finite words, there exist indices $j > i \ge 1$, powers
  $k, \ell \ge 1$ and a position $m \in \os{1, \dots, n}$ such that $u' v_1'
  v_2' \cdots v_{i-1}' = u v^k a_1 a_2 \cdots a_m$ and $v_i' v_{i+1}' \cdots
  v_j' = a_{m+1} a_{m+2} \cdots a_n v^\ell a_1 a_2 \cdots a_m$.
  We set $v_1 = a_1 a_2 \cdots a_m$ and $v_2 = a_{m+1} a_{m+2} \cdots a_n$.
  Then $v_1 v_2 = v$,
  \begin{ceqn}
    \begin{alignat*}{4}
      h(u v^k v_1) &= h(u v^k a_1 a_2 \cdots a_m) & &= h(u' v_1' v_2' \cdots v_{i-1}') & &= se^{i-1} & &= s, \\
      h(v_2 v^\ell v_1) &= h(a_{m+1} a_{m+2} \cdots a_n v^\ell a_1 a_2 \cdots a_m) & &= h(v_i' v_{i+1}' \cdots v_j') & &= e^{j-i+1} & &= e.
    \end{alignat*}
  \end{ceqn}
  If $\ell$ is even, we can replace $\ell$ by $2\ell + 1$ since $h(v_2
  v^{2\ell+1} v_1) = h(v_2 v^\ell v_1 v_2 v^\ell v_1) = e^2 = e$.
  The converse implication is trivial.
\end{proof}

\begin{theorem}\label{thm:lower}
  Let $L \subseteq A^\omega$ be a language weakly recognized by some morphism
  $h \colon A^+ \to S$ and let $u, v, z \in A^+$ and $x, y \in A^*$ be words
  such that one of the following two properties holds:
  \begin{enumerate}
    \item\label{enum:loweraaa}  $xuyz^\omega \in L$ and $xvyz^\omega \not\in L$
    \item\label{enum:lowerbbb} $x{(uy)}^\omega \in L$ and $x{(uyvy)}^\omega \not\in L$ and $x{(vyuy)}^\omega \not\in L$.
  \end{enumerate}
  Then $h(u) \ne h(v)$.
\end{theorem}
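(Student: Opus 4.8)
The plan is to prove the contrapositive: assuming $h(u) = h(v)$, I will show that neither property~\ref{enum:loweraaa} nor property~\ref{enum:lowerbbb} can hold. Since $L$ is weakly recognized, write $L = \bigcup_i [s_i]{[e_i]}^\omega$ for linked pairs $(s_i,e_i)$, so that membership of an $\omega$-word in $L$ is equivalent to its membership in some single set $[s]{[e]}^\omega$. The central tool throughout is Lemma~\ref{lem:technical}, which reduces each such membership to the existence of a factorization of the periodic part together with exponents $k,\ell$ with $\ell$ odd. The key point is that the conditions produced by the lemma only refer to $h$-values, so they are insensitive to replacing $u$ by $v$; the only genuine subtlety, which forces the disjunction in property~\ref{enum:lowerbbb}, is a parity phenomenon.

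For property~\ref{enum:loweraaa} I write $xuyz^\omega = (xuy)\,z^\omega$ and apply Lemma~\ref{lem:technical} with periodic part $z$ (note $xuy \in A^+$ and $z \in A^+$). Membership $xuyz^\omega \in [s]{[e]}^\omega$ is then equivalent to the existence of $z = z_1 z_2$ and $k,\ell$ (with $\ell$ odd) satisfying $h(xuy)\,h(z)^k\,h(z_1) = s$ and $h(z_2 z^\ell z_1) = e$. The first equation depends on $u$ only through $h(xuy) = h(x)h(u)h(y)$, which equals $h(xvy)$ because $h(u)=h(v)$, and the second equation does not involve $u$ at all. Hence the same witness certifies $xvyz^\omega \in [s]{[e]}^\omega$ and conversely, so $xuyz^\omega \in [s_i]{[e_i]}^\omega \Leftrightarrow xvyz^\omega \in [s_i]{[e_i]}^\omega$ for every $i$, giving $xuyz^\omega \in L \Leftrightarrow xvyz^\omega \in L$, which contradicts property~\ref{enum:loweraaa}.

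For property~\ref{enum:lowerbbb}, set $g = h(uy) = h(vy)$ (using $h(u)=h(v)$). Suppose $x{(uy)}^\omega \in L$, say $x{(uy)}^\omega \in [s]{[e]}^\omega \subseteq L$. By Lemma~\ref{lem:technical} there is a factorization $uy = p_1 p_2$ and exponents $k \ge 0$ and odd $\ell \ge 0$ with $s = h(x)\,g^k\,h(p_1)$ and $e = h(p_2)\,g^\ell\,h(p_1)$ (reading $h(x)$ as the identity of $S^1$ when $x$ is empty). I now exhibit the \emph{same} linked pair $(s,e)$ as a recognizer of one of the two forbidden words. The decisive observation is that in $x{(uyvy)}^\omega = x\cdot uy\cdot vy\cdot uy\cdots$ every $uy$-block is preceded by an even number of length-$|uy|$ blocks of $h$-value $g$, whereas in $x{(vyuy)}^\omega = x\cdot vy\cdot uy\cdot vy\cdots$ every $uy$-block is preceded by an odd number of such blocks. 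Cutting a prefix just after the leading $p_1$ of a suitable $uy$-block therefore produces a prefix of value $h(x)\,g^{j}\,h(p_1)$ with $j$ even in the first word and $j$ odd in the second. Choosing the word whose parity matches that of $k$, and selecting the $uy$-block preceded by exactly $k$ of these $g$-blocks, yields a prefix of value exactly $s$. In both cases the remaining tail equals $p_2\,{(vyuy)}^\omega$, which I factor into consecutive blocks of the form $p_2\,(\cdots)\,p_1$ of value $h(p_2)\,g^\ell\,h(p_1) = e$: because $\ell$ is odd, $(\ell+1)/2$ is an integer and the block boundaries realign with the $uy$-blocks after every $(\ell+1)/2$ periods, so the tiling closes up consistently. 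This places the chosen word in $[s]{[e]}^\omega \subseteq L$, contradicting property~\ref{enum:lowerbbb}.

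The routine part is property~\ref{enum:loweraaa}, which is a direct substitution once Lemma~\ref{lem:technical} is invoked. The main obstacle is the parity bookkeeping in property~\ref{enum:lowerbbb}: one must recognize that $x{(uyvy)}^\omega$ and $x{(vyuy)}^\omega$ realize opposite parities of the $g$-exponent at the mid-block cut, so that exactly one of them matches the parity of the $k$ forced by the witness for $x{(uy)}^\omega$, and one must verify that the oddness of $\ell$ guaranteed by Lemma~\ref{lem:technical} is precisely what makes the $e$-block tiling of the tail periodic. I expect the careful indexing of these block boundaries to be the most delicate step to write out rigorously.
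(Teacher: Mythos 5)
Your proposal is correct and takes essentially the same route as the paper's own proof: the contrapositive via Lemma~\ref{lem:technical}, direct substitution of $h(u)=h(v)$ for property~\ref{enum:loweraaa}, and for property~\ref{enum:lowerbbb} the identical parity split on $k$ (even $k$ gives the prefix $x{(uyvy)}^{k/2}p_1$ of value $s$ and hence $x{(uyvy)}^\omega \in L$; odd $k$ gives $x{(vyuy)}^{(k-1)/2}vy\,p_1$ and $x{(vyuy)}^\omega \in L$), with the oddness of $\ell$ making the tail $p_2{(vyuy)}^\omega$ tile into blocks of value $e$ exactly as in the paper. The only cosmetic slip is describing the $g$-valued blocks as \emph{length-$\abs{uy}$} blocks (the $vy$-blocks may have a different length), but nothing in your argument depends on block lengths, only on their count and $h$-value.
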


\begin{proof}
  We consider finite words $u, v \in A^+$ such that $h(u) = h(v)$ and show that
  in this case, neither of the properties can hold.

  If the first property holds, there exists a linked pair $(s, e)$ such that
  $xuyz^\omega \in [s]{[e]}^\omega \subseteq L$. Thus, by
  Lemma~\ref{lem:technical}, we have $h(xuy z^k z_1) = s$ and $h(z_2 z^\ell
  z_1) = e$ for some factorization $z = z_1 z_2$ and powers $k, \ell \ge 0$.
  Now, since $h(xvy z^k z_1) = h(xuy z^k z_1) = s$, we obtain $xvyz^\omega \in
  [s]{[e]}^\omega \subseteq L$, a contradiction.

  If the second property holds, there exists a linked pair $(s, e)$ of $S$ such
  that $xw^\omega \in [s]{[e]}^\omega \subseteq L$ where $w = uy$. Thus, by
  Lemma~\ref{lem:technical}, we have $h(x w^k w_1) = s$ and $h(w_2 w^\ell w_1)
  = e$ for some factorization $w = w_1 w_2$, some power $k \ge 0$ and some odd
  power $\ell \ge 0$. Since $\ell$ is odd $(\ell-1)/2$ is an integer and we
  have $h(w_2 {(vyuy)}^{(\ell-1) / 2} vy w_1) = h(w_2 {(uy)}^\ell w_1) = e$.
  Now, if $k$ is odd as well, we obtain $h(x {(vyuy)}^{(k-1)/2} vy w_1) = h(x
  {(uy)}^k w_1) = s$ and therefore, $x{(vyuy)}^\omega \in L$. Equivalently, if
  $k$ is even, we have $h(x {(uyvy)}^{k/2} w_1) = h(x {(uy)}^k w_1) = s$ and
  hence, $x{(uyvy)}^\omega \in L$. Both cases contradict
  Property~\ref{enum:lowerbbb} above.
\end{proof}

The next proposition is another simple, yet useful, tool for proving lower
bounds. It allows to transfer bounds from the setting of finite words to
infinite words.

\begin{proposition}\label{prop:transfer}
  Let $\autA = (Q, A, \delta, I, F)$ and let $a \in A$ be a letter such
  that for all $q \in Q$ and $q_f \in F$, we have $(q, a, q_f) \in \delta$ if
  and only if $q = q_f$. Let $K = \LBA(\autA)$ and let $L = \LNFA(\autA)$.
  Then each semigroup weakly recognizing $K$ has at least $\abs{\Synt{A}}$
  elements.
\end{proposition}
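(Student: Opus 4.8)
The plan is to reduce the infinite-word problem to the finite-word problem by exploiting the special structure of the letter $a$. The central observation, which I would establish first, is that for every finite word $u \in A^+$ we have $u a^\omega \in K$ if and only if $u \in L$. The backward direction is immediate: any accepting finite run of $u$ ends in some $q_f \in F$, and since $(q_f, a, q_f) \in \delta$ by hypothesis, appending this self-loop produces an accepting Büchi run of $u a^\omega$. The forward direction is the crux. In any accepting Büchi run on $u a^\omega$ the prefix $q_0 \xrightarrow{u} q$ is fixed and, since $u$ is finite, infinitely many states reached while reading the $a$-tail lie in $F$. But the hypothesis says that the only $a$-transition entering a final state is the self-loop at that state, so whenever the tail is in a state $p \in F$ it must have come from $p$ itself; a backward induction along the tail then forces $q = p \in F$ and the run to be constant from the moment $u$ has been read. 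In particular the run of $u$ ends in a final state, so $u \in L$.

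Next I would show that membership of $w a^\omega$ in $K$ depends only on $h(w)$. Writing the given weak recognizer as $K = \bigunion_i [s_i]{[e_i]}^\omega$ and applying Lemma~\ref{lem:technical} with the single letter $a$ in the role of $v$, the word $w a^\omega$ lies in $[s_i]{[e_i]}^\omega$ precisely when $h(w)\, g = s_i$ and $g' = e_i$, where $g$ and $g'$ are powers of $h(a)$ assembled from $a^k$, $a^\ell$ and one of the two trivial factorizations of $a$. Since the second equation does not involve $w$ at all and the first depends on $w$ only through the product $h(w)\, g$, membership factors through $h(w)$. Hence there is a set $P \subseteq S$ with $w a^\omega \in K \iff h(w) \in P$ for all $w \in A^+$; the only care needed here is to keep every argument of $h$ nonempty and to treat both factorizations of $a$.

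Finally I would combine the two steps. If $h(u) = h(v)$, then $h(xuy) = h(xvy)$ for all $x, y \in A^*$, so the second step gives $xuy\, a^\omega \in K \iff xvy\, a^\omega \in K$, and the first step turns this into $xuy \in L \iff xvy \in L$. As $L \subseteq A^+$, this is exactly the condition $u \equiv_L v$, so the kernel congruence of $h$ refines the syntactic congruence of $L$. Counting congruence classes then yields $\abs{S} \geq \abs{h(A^+)} \geq \abs{\Synt{A}}$, which is the claimed bound. I expect the main obstacle to be the forward direction of the first step: one must argue that the Büchi acceptance condition, together with the constraint that final states are entered on $a$ only via self-loops, genuinely pins the tail of the run down to a single final self-loop, so that infinite acceptance of $u a^\omega$ is equivalent to plain finite acceptance of $u$.
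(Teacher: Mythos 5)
Your proof is correct and follows essentially the same route as the paper: reduce to finite words via the equivalence $w a^\omega \in K \Leftrightarrow w \in L$ (your backward-induction argument on the $a$-tail is the detailed version of what the paper states tersely), and then conclude that $h(u) = h(v)$ forces $u \equiv_L v$. The only cosmetic difference is that you re-derive the needed special case directly from Lemma~\ref{lem:technical} with $v = a$, whereas the paper simply invokes Theorem~\ref{thm:lower}, whose proof is that same argument.
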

\begin{proof}
  Let $h \colon A^+ \to S$ be a morphism weakly recognizing $K$ and consider
  two words $u, v \in A^+$ such that $u \not\equiv_L v$. Then, without loss of
  generality, there exist $x, y \in A^*$ such that $xuy \in L$ and $xvy \not\in
  L$. This implies $xuya^\omega \in K$ since $(q_f, a, q_f) \in \delta$
  for all $q_f \in F$.  Equivalently, because of $(q, a, q_f) \not\in \delta$
  for all $q \in Q \setminus F$ and $q_f \in F$, we have $xvya^\omega \not\in
  K$. By Theorem~\ref{thm:lower}, this yields $h(u) \ne h(v)$.
\end{proof}

\subsection{The Full Automata Technique}

The \emph{full automata technique} is a useful tool for proving lower bounds
for the conversion of automata to other objects. It was introduced by
Yan~\cite{Yan08lmcs} who attributes it to Sakoda and
Sipser~\cite{SakodaSipser78stoc:short}. The technique works for both accepted
and Büchi-accepted languages.
However, we will prove the main result of this section only for the setting of
finite words and use Proposition~\ref{prop:transfer} to obtain analogous
results for infinite words.

Let $Q$ be a finite set and let $I, F$ be subsets of $Q$. The \emph{full
automaton} $\fullaut{Q}{I}{F}$ is the finite automaton $(Q, B, \Delta, I, F)$
defined by $B = 2^{Q^2}$ and by the transition relation $\Delta = \set{(p, T,
q) \in Q \times B \times Q}{(p, q) \in T}$.

\begin{theorem}
  Let $\autA = (Q, A, \delta, I, F)$ be a finite automaton and let
  $\fullaut{Q}{I}{F} = (Q, B, \Delta, I, F)$ be the corresponding full
  automaton.
  Then the syntactic semigroup of $\LNFA(\autA)$ divides the syntactic
  semigroup of $\LNFA(\fullaut{Q}{I}{F})$.
\end{theorem}
\begin{proof}
  We first define a morphism $\pi \colon A^+ \to B^+$ by $\pi(a) = \set{(p,
  q)}{(p, a, q) \in \delta}$. Let $K = \LNFA(\fullaut{Q}{I}{F})$ and let $L =
  \LNFA(\autA)$.
  It suffices to show that $\pi(u) \equiv_K \pi(v)$ implies $u \equiv_L v$.
  Thus, consider $u, v \in A^+$ such that $\pi(u) \equiv_K \pi(v)$. In
  particular, for all $x, y \in A^*$, we have $\pi(xuy) \in K$ if and only if
  $\pi(xvy) \in K$.
  By the definition of $\pi$, we have $\pi(w) \in K$ if and only if $w \in L$
  for all $w \in A^+$. Using the equivalence from above, this yields $xuy \in
  L$ if and only if $xvy \in L$ for all $x, y \in A^*$, thereby proving that
  $u \equiv_L v$.
\end{proof}

\section{From Automata to Weakly Recognizing Morphisms}
\label{sec:aut-hom}

The standard construction for converting a finite automaton $\autA$ to a
recognizing morphism is the so-called \emph{transition semigroup} of $\autA$.
For a given word $u \in A^+$, it encodes for each pair $(p, q)$ of states
whether there is a run of $u$ on $\autA$ starting in $p$ and ending in $q$.
Thus, for a finite automaton with $n$ states the transition semigroup has
$2^{n^2}$ elements. For details on the construction, we refer
to~\cite{pp04:short,pin86:short}. We show that this construction is optimal.

\begin{theorem}
  Let $\autA$ be a finite automaton with $n$ states. Then there exists a
  semigroup recognizing $\LNFA(\autA)$ (resp.\ weakly recognizing
  $\LBA(\autA)$) which has at most $2^{n^2}$ elements and this bound is tight.
\end{theorem}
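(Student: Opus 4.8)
The statement has two halves: an upper bound of $2^{n^2}$ (both for the finite-word and the infinite-word case) and matching tightness. The upper bound is essentially already sketched in the paragraph preceding the theorem, so my main task is to verify it carefully and then construct a witness language that forces every (weakly) recognizing semigroup to have at least $2^{n^2}$ elements.

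\medskip
\noindent\textbf{Upper bound.}
First I would confirm the transition-semigroup construction. Given $\autA = (Q, A, \delta, I, F)$ with $\abs{Q} = n$, define for each $u \in A^+$ the relation $T_u = \set{(p, q) \in Q \times Q}{\text{there is a run of } u \text{ from } p \text{ to } q}$. The map $u \mapsto T_u$ is a morphism from $A^+$ into the semigroup of binary relations on $Q$ under composition, and its image has at most $2^{n^2}$ elements. A word $u$ is accepted by $\autA$ iff $T_u \intersect (I \times F) \ne \emptyset$, so $\LNFA(\autA)$ is a union of classes $[T_u]$ and the morphism recognizes it. For the infinite case, I would observe that membership of $\alpha$ in $\LBA(\autA)$ depends only on the linked-pair data: $\alpha$ is Büchi-accepted iff it lies in some $[T_u]{[T_e]}^\omega$ where $T_u \intersect (I \times F') \ne \emptyset$ and $T_e$ witnesses a cycle through a final state on $F'$. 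This exhibits $\LBA(\autA)$ as a union of sets $[s_i]{[e_i]}^\omega$ over linked pairs, giving weak recognition by the same semigroup of size at most $2^{n^2}$.

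\medskip
\noindent\textbf{Lower bound (tightness).}
Here is where the tools built earlier pay off. The strategy is to exhibit, for each $n$, an automaton $\autA_n$ on $n$ states such that the syntactic semigroup of $\LNFA(\autA_n)$ has exactly $2^{n^2}$ elements, and then transfer this to the infinite setting via Proposition~\ref{prop:transfer}. The natural candidate is the full automaton $\fullaut{Q}{I}{F}$ with $\abs{Q} = n$, suitably chosen $I, F$ (the full-automata technique is designed precisely for this). By the theorem on full automata just proved, the syntactic semigroup of $\LNFA(\autA)$ divides that of $\LNFA(\fullaut{Q}{I}{F})$ for \emph{every} $n$-state $\autA$; so if I show the full automaton's transition semigroup realizes all $2^{n^2}$ relations and that they are pairwise syntactically inequivalent, I get the clean bound. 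For the infinite transfer, I would augment the alphabet with a letter $a$ satisfying the self-loop-on-final-states condition of Proposition~\ref{prop:transfer}, so that a lower bound on $\abs{\Synt{A}}$ for the finite-word language immediately forces the same size on any weakly recognizing semigroup for the corresponding $\omega$-language.

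\medskip
\noindent\textbf{The main obstacle.}
The delicate point is proving that the full automaton on $n$ states actually attains $2^{n^2}$ \emph{distinct syntactic classes} — that is, that no two distinct relations $T, T'$ on $Q$ are merged by $\equiv_L$. Concretely, I must produce for each pair $T \ne T'$ a context $(x, y)$ over the alphabet $B = 2^{Q^2}$ separating them. Given $(p,q) \in T \setminus T'$, I would choose $x = \os{(i, p)}$ (with $i \in I$) and $y = \os{(q, f)}$ (with $f \in F$), so that the one-letter word $T$ yields an accepting run through $p \to q$ while $T'$ does not; this is exactly the place where the design of $\fullaut{Q}{I}{F}$ — every pair in the label being a legal transition — makes the separation routine rather than delicate. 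I expect the bookkeeping to confirm that the full-automaton semigroup has precisely $2^{n^2}$ elements, matching the upper bound and completing both cases.
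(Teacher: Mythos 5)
Your proposal is correct and follows essentially the same route as the paper: upper bound via the transition semigroup, tightness via the full automaton with one-letter contexts $\os{(i,p)}$, $\os{(q,f)}$ separating all $2^{n^2}$ letters (the paper takes $I = F = Q$ and uses $\os{(p,p)}$, $\os{(q,q)}$), and transfer to the Büchi case through Proposition~\ref{prop:transfer}. The only cosmetic difference is that your augmentation of the alphabet is unnecessary: the identity relation $\set{(q,q)}{q \in Q}$ is already a letter of $B = 2^{Q^2}$ and satisfies the hypothesis of Proposition~\ref{prop:transfer} for any choice of $F$, which is exactly how the paper concludes.
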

\begin{proof}
  Each language that is accepted (resp.\ Büchi-accepted) by $\autA$ is
  recognized (resp.\ weakly recognized) by the transition semigroup of $\autA$
  which has size $2^{n^2}$.

  To show that this is optimal, we consider the full automaton
  $\fullaut{N}{N}{N} = (N, B, \Delta, N, N)$ where $N = \os{1, \dots, n}$ and
  let $L = \LNFA(\fullaut{N}{N}{N})$.
  For two different letters $X, Y \in B$ we may assume, without loss of
  generality, that there exist $p, q \in N$ such that $(p, q) \in X \setminus
  Y$. With $P = \os{(p, p)}$ and $Q = \os{(q, q)}$, we then have $PXQ \in L$
  and $PYQ \not\in L$. Thus, $X \not\equiv_L Y$.
  This shows that $\Synt{B}$ has at least $\abs{B} = 2^{n^2}$ elements.

  Noting that the transitions labeled by the letter $\set{(q, q)}{q \in N}$
  form self-loops at each state, the Büchi case immediately follows by
  Proposition~\ref{prop:transfer}.
\end{proof}

The proof of the optimality result requires a large alphabet that grows
super-exponentially in the number of states of the automaton.
A natural restriction is considering automata over fixed-size alphabets.

By a result of Chrobak~\cite{Chrobak86}, the size of the syntactic semigroup of
an unary language accepted by a finite automaton of size $n$ is in
$2^{\bigO{\sqrt{n \log n}}}$ (note that since unary languages are commutative,
the syntactic monoid is isomorphic to the minimal deterministic automaton).
Over infinite words, the unary case is uninteresting since the only language
over the alphabet $A = \os{a}$ is $\os{a^\omega}$.

For binary alphabets, a lower bound can be obtained by combining the full
automata technique with a result from the study of semigroups of binary
relations~\cite[Proposition~6]{KimRoush78jmp}. In order to keep the paper
self-contained, we present a proof that is adapted to finite automata and does
not require any knowledge of binary relations.
\begin{theorem}
  Let $A = \os{a, b}$ and let $n$ be an odd natural number. There exists a
  language $L \subseteq A^+$ (resp.\ $L \subseteq A^\omega$) and a finite
  automaton with $n$ states accepting (resp.\ Büchi-accepting) $L$, such that
  each semigroup recognizing (resp.\ weakly recognizing) $L$ has at least
  $2^{{(n-1)}^2/4}$ elements.
\end{theorem}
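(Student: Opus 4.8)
The plan is to realize the full-automata lower bound over a binary alphabet by encoding a large subsemigroup of the Boolean relations on the state set using only two generators. Write $n = 2k+1$ and split the states into three blocks: a row block $r_1, \dots, r_k$, a column block $c_1, \dots, c_k$, and a single control state. Following the two-generator construction of Kim and Roush~\cite{KimRoush78jmp}, but phrased directly in terms of transitions, I would define two relations $R_a, R_b \subseteq Q \times Q$ --- the transition relations of the letters $a$ and $b$ --- so that the subsemigroup of binary relations generated by $R_a$ and $R_b$ contains, for every Boolean matrix $M \in \os{0,1}^{k \times k}$, a relation $\rho_M$ whose restriction to $\os{r_1,\dots,r_k} \times \os{c_1,\dots,c_k}$ is exactly $M$. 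Here $a$ would act as a cyclic shift on the row and column indices, while $b$ installs a single designated connection; a product of suitably shifted copies then builds up an arbitrary matrix. These $2^{k^2}$ relations are pairwise distinct, and each is the transition relation of some word $w_M \in A^+$.

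Next I would upgrade \emph{distinct as relations} to \emph{distinct modulo $\equiv_L$}, where $L = \LNFA(\autA)$. I would choose the initial and final sets so that the control state and the shift letter $a$ allow one to address an individual row and column: for indices $i, j \in \os{1, \dots, k}$ there are context words $x_i, y_j \in A^*$ (essentially powers of $a$) that drive $\autA$ from its initial configuration to the single row state $r_i$ and that reach a final state precisely from the column state $c_j$. Then $x_i\, w\, y_j \in L$ if and only if the pair $(r_i, c_j)$ lies in the transition relation of $w$, so $x_i\, w_M\, y_j \in L$ iff $M_{ij} = 1$. Consequently, whenever $M \ne M'$ some pair $(i,j)$ in their symmetric difference witnesses $w_M \not\equiv_L w_{M'}$. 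This produces at least $2^{k^2} = 2^{(n-1)^2/4}$ syntactic classes, so the syntactic semigroup of $L$ --- and hence every semigroup recognizing $L$ --- has at least $2^{(n-1)^2/4}$ elements.

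For the infinite-word statement I would arrange the construction so that one of the two letters satisfies the hypothesis of Proposition~\ref{prop:transfer}, that is, it forms self-loops on exactly the final states and admits no other transitions into $F$. Taking $L = \LNFA(\autA)$ and $K = \LBA(\autA)$, the proposition then transfers the bound $\abs{\Synt{A}} \ge 2^{(n-1)^2/4}$ to every semigroup weakly recognizing $K$, which yields the claimed lower bound in the Büchi case.

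The main obstacle is the combinatorial core of the first paragraph: exhibiting two fixed relations whose generated semigroup already contains all $2^{k^2}$ prescribed matrices. The difficulty is that relation composition does not combine bits by union, so one cannot simply merge single-edge relations; the addressing via the shift $a$ and the staging through the control state must be designed so that each matrix genuinely arises as a product. A secondary point requiring care is reconciling this design with the self-loop requirement of Proposition~\ref{prop:transfer} for the Büchi case, since the shift letter is also the natural candidate for addressing --- the final set and the roles of $a$ and $b$ have to be chosen so that both constraints hold simultaneously.
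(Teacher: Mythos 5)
Your overall architecture matches the paper's: realize exponentially many binary relations on the state set by words over $\os{a,b}$, separate them modulo the syntactic congruence of $L = \LNFA(\autA)$ using contexts that are powers of $a$, and lift to the Büchi case via Proposition~\ref{prop:transfer}. But there is a genuine gap exactly where you flag it: you never exhibit the two transition relations, and the mechanism you sketch ($b$ ``installs a single designated connection,'' products of shifted copies ``build up'' an arbitrary matrix) fails as stated for the reason you yourself identify --- relation composition does not combine bits by union. Without an additional idea, a product of relations each contributing one cross edge is again (at most) a single-edge relation between the blocks, not an arbitrary matrix, so the claim that all $2^{k^2}$ matrices arise as products is asserted rather than proved. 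If you instead wanted to treat the realization as a black box you would have to invoke the precise statement of \cite[Proposition~6]{KimRoush78jmp} and still verify your addressing and acceptance conditions, which your sketch does not do.

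The paper closes this gap with a specific design principle: make every building-block word realize the \emph{identity relation plus one cross edge}; since each factor then fixes all states, concatenation genuinely unions the extra edges. Concretely, the $n = 2m+1$ states form two $a$-cycles of lengths $m$ and $m+1$ (no separate control state), every state carries a $b$-self-loop, and there is a single bridging transition $(m, b, m+1)$. For $1 \le i,j \le m$ the word $u_{i,j} = a^{p_{i,j}} b a^{q_{i,j}}$ with $p_{i,j} = (m+j-i)m - i$ and $q_{i,j} = (m+i-j+2)m + i$ satisfies $p_{i,j} + q_{i,j} = 2m(m+1)$, a common multiple of both cycle lengths, so treating every $b$ as a self-loop fixes each state, while routing through the unique bridge contributes exactly the one extra edge $(i, m+j)$; determinism of the $a$-transitions and uniqueness of the bridge show no other edges occur. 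Hence the concatenation $u_X$ of the $u_{i,j}$ with $(i,j) \in X$ realizes the identity plus $\set{(i, m+j)}{(i,j) \in X}$, and with initial state $1$ and final state $n$ the contexts $a^{i-1}$ and $a^{n-j}$ separate $u_X$ from $u_Y$ whenever $X \ne Y$. Your secondary worry also dissolves in this design: $b$ self-loops at the unique final state $n$ and the only non-loop $b$-edge ends in $m+1 \ne n$, so Proposition~\ref{prop:transfer} applies with the letter $b$, with no tension against the addressing role of $a$.
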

\begin{proof}
  We first analyze the case of finite words.
  Let $m = (n-1)/2$ and let $M = \os{1, \dots, m}$. We consider the automaton
  $\autA$ depicted below and let $L = \LNFA(\autA)$.
  \begin{figure}[H]
    \centering
    \begingroup
    \colorlet{Alightgray}{gray!25}
    \setlength{\medmuskip}{0mu}
    \begin{tikzpicture}[node distance=1.2cm,>=latex,bend angle=55,initial text=,every state/.style={minimum size=9mm,inner sep=0pt},every loop/.style={in=75,out=105,looseness=6}]
      \tikzstyle{every node}=[font=\footnotesize]
      \node[state,fill=Alightgray]           (1)               {$1$};
      \node[state,fill=Alightgray]           (2) [right of=1] {$2$};
      \node[state,draw=none] (3) [right of=2] {$\cdots$};
      \node[state,fill=Alightgray]           (4) [right of=3] {$m$};
      \node[state,fill=Alightgray]           (5) [xshift=0.75cm,right of=4] {$m+1$};
      \node[state,fill=Alightgray]           (6) [right of=5] {$m+2$};
      \node[state,draw=none] (7) [right of=6] {$\cdots$};
      \node[state,accepting,fill=Alightgray] (8) [right of=7] {$n$};
      \coordinate[left of=1,xshift=0cm] (s);

      \path[->] (s) edge [above] node {} (1);
      \path[->] (1) [bend left] edge [above] node {$a$} (2);
      \path[->] (2) [bend left] edge [above] node {$a$} (3);
      \path[->] (3) [bend left] edge [above] node {$a$} (4);
      \path[->] (4) [bend left] edge [above] node {$a$} (1);
      \path[->] (5) [bend left] edge [above] node {$a$} (6);
      \path[->] (6) [bend left] edge [above] node {$a$} (7);
      \path[->] (7) [bend left] edge [above] node {$a$} (8);
      \path[->] (8) [bend left] edge [above] node {$a$} (5);

      \path[->] (1) [loop above] edge [above] node {$b$} (1);
      \path[->] (2) [loop above] edge [above] node {$b$} (2);
      \path[->] (4) [loop above] edge [above] node {$b$} (4);
      \path[->] (4) edge [above] node {$b$} (5);
      \path[->] (5) [loop above] edge [above] node {$b$} (5);
      \path[->] (6) [loop above] edge [above] node {$b$} (6);
      \path[->] (8) [loop above] edge [above] node {$b$} (8);
    \end{tikzpicture}
    \endgroup
  \end{figure}
  \noindent
  For $1 \le i,j \le m$ we first define $p_{i,j} = (m+j-i)m - i$ and $q_{i,j} =
  (m+i-j+2)m + i$.
  Furthermore, we set $u_{i,j} = a^{p_{i,j}}ba^{q_{i,j}}$.
  We claim that for each $i, j$ there exists a path from state $k$ to $\ell$
  labeled by $u_{i,j}$ if and only if $(k, \ell) = (i, j+m)$ or $k = \ell$.

  The two $a$-cycles have length $m$ and $m+1$, respectively. Since for each
  pair $(i, j)$ we have $p_{i,j} + q_{i,j} = 2m(m + 1)$ and since one can
  always stay in the same state when reading the letter $b$, there clearly
  exists a path from each state to itself labeled by $u_{i,j}$.
  Now, fix some $(i, j)$ and let $(k, \ell) = (i, j+m)$. We have $i + p_{i,j} =
  (m + j - i)m$ which means that, when starting in state $i$, one can reach
  state $m$ by reading $a^{p_{i,j}}$. Being in state $m$, one of the
  $b$-transitions leads to state $m+1$. From there on, we make a single step
  backwards whenever reading the factor $a^m$. Thus, by reading the word
  $a^{q_{i,j}}$, we perform $(m+i-j+2) - i = m-j+2$ backward steps in total,
  finally reaching state $n+1 - (m-j+2) = 2m+2-(m-j+2) = m + j = \ell$.
  The converse direction of our claim follows immediately since the automaton
  is deterministic when restricted to $a$-transitions and since one can only
  reach states $\ell > m$ by using the transition $(m, b, m+1)$.

  For $X \subseteq M \times M$, we now define $u_X$ as the concatenation of
  all $u_{i,j}$ with $(i,j) \in X$, where the factors are ordered according to
  their indices $(i, j)$. By the above argument, it is easy to see that there
  is a path from state $i$ to $j + m$ labeled by $u_X$ if and only if $(i, j)
  \in X$.
  Since there are $2^{m^2} = 2^{{(n-1)}^2/4}$ subsets of the Cartesian product
  $M \times M$, it remains to show that for different subsets $X, Y \subseteq M
  \times M$, we have $u_X \not\equiv_L v_Y$.
  To this end, assume without loss of generality that $(i, j) \in X \setminus
  Y$. Then $a^{i-1} u_X a^{n-j} \in L$ but $a^{i-1} u_Y a^{n-j} \not\in L$, as
  desired.

  For the Büchi case note that for all $i \in Q$, we have $(i, b, n) \in
  \delta$ if and only if $i = n$. Therefore, by
  Proposition~\ref{prop:transfer} and the arguments above, the smallest
  semigroup weakly recognizing $\LBA(\autA)$ has at least $2^{{(n-1)}^2/4}$
  elements.
\end{proof}

The construction above does not reach the $2^{n^2}$ bound obtained when using a
larger alphabet. However, this is not surprising, given the following result.
\begin{proposition}
  Let $m \in \mathbb{N}$ be a fixed integer and let $A$ be an alphabet of size
  $m$.
  Then there exists an integer $n_m \ge 1$ such that for each finite automaton
  $\autA$ over $A$ with $n \ge n_m$ states, the language $\LNFA(\autA)
  \subseteq A^*$ (resp.\ $\LBA(\autA) \subseteq A^\omega$) is recognized
  (resp.\ weakly recognized) by a morphism onto a semigroup with less than
  $2^{n^2}$ elements.
\end{proposition}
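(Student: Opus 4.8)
The plan is to realise the required small recogniser as the transition semigroup itself and to show that, over a fixed alphabet, this semigroup can no longer exhaust the full monoid of binary relations once $n$ is large. Write $B_n$ for the monoid of all binary relations on the $n$-element state set, i.e.\ all $n \times n$ Boolean matrices under Boolean matrix product; it has $2^{n^2}$ elements. The transition semigroup of $\autA$ is the subsemigroup of $B_n$ generated by the $m$ relations $M_a$ induced by the letters $a \in A$, and it recognises $\LNFA(\autA)$ while simultaneously weakly recognising $\LBA(\autA)$. Hence it suffices to prove that this semigroup is a \emph{proper} subsemigroup of $B_n$, since then it has at most $2^{n^2}-1$ elements and settles both the finitary and the $\omega$-case at once. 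As every $m$-tuple of relations occurs as the transition structure of some automaton, the whole statement reduces to the purely algebraic claim that $B_n$ is not generated, as a semigroup, by $m$ elements as soon as $n \ge n_m$; equivalently, the minimal number of semigroup generators of $B_n$ tends to infinity.

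To attack this, I would use the \emph{Boolean rank} $\rho(M)$ of a relation, the least $k$ for which $M$ factors through a $k$-element set (equivalently, the minimal number of all-ones rectangles covering the ones of $M$). Its decisive feature is monotonicity under products, $\rho(XY) \le \min\os{\rho(X), \rho(Y)}$, together with $\rho(M) \le n$ and equality exactly for the \emph{full-rank} relations. Consequently, if a full-rank relation $M$ is written as a product in $B_n$, then every factor is again full-rank, and more generally in any grouping $M = UV$ both $U$ and $V$ are full-rank. The full-rank relations appearing in a generated subsemigroup are therefore produced using full-rank generators only, and the group of units -- the permutation matrices $S_n$, which is $2$-generated -- acts on this layer by left and right multiplication.

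The heart of the argument is a counting statement on this full-rank layer. Call a full-rank relation $M$ \emph{indecomposable} if it is not a permutation and every grouping $M = XY$ into full-rank relations has a permutation as one of its factors. If $M$ is indecomposable, then any generating set of $B_n$ must meet the two-sided orbit $S_n\, M\, S_n$: in a generating product $M = x_1 \cdots x_k$ one isolates the first non-permutation factor $x_i$, writes $M = (p\, x_i)\, q$ with $p, q$ products of permutations, and indecomposability forces $q$ to be a permutation, whence $x_i = p^{-1} M q^{-1} \in S_n\, M\, S_n$. Since distinct orbits are disjoint, the minimal number of generators of $B_n$ is bounded below by the number of orbits of indecomposable full-rank relations, and it remains to show that this number exceeds $m$ for all large $n$.

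I expect this last point to be the main obstacle. Naive candidates are not indecomposable -- for instance $I + E_{1,2} + E_{3,4}$ splits as $(I + E_{1,2})(I + E_{3,4})$ -- so one cannot simply count near-identity perturbations; the indecomposable relations are the ``entangled'' full-rank configurations that admit no such splitting, and the difficulty is to exhibit a family of them that grows in the number of $S_n \times S_n$-orbits (equivalently, in the number of bipartite-graph isomorphism types) while verifying indecomposability in each case. The remaining ingredients -- the reduction to generation of $B_n$, the monotonicity of Boolean rank, and the orbit lower bound -- are routine, so the entire difficulty is concentrated in producing and analysing these generators. Alternatively, one may sidestep the explicit construction by invoking the known fact that the minimal number of generators of the semigroup of binary relations grows with $n$.
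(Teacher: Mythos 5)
Your reduction---realize the recognizer as the transition semigroup sitting inside the monoid $B_n$ of all binary relations on the state set, observe that it simultaneously handles $\LNFA(\autA)$ and $\LBA(\autA)$, and reduce the proposition to the purely algebraic claim that $B_n$ is not generated by $m$ elements once $n$ is large---is exactly the paper's argument, which likewise gives no full proof and instead invokes Devadze's theorem (as proved by Konieczny) on generating sets of the semigroup of Boolean matrices; your closing fallback cites precisely this known fact. Your attempted self-contained route via Boolean rank monotonicity, the full-rank layer, and $S_n \times S_n$-orbits of indecomposable relations is a sound skeleton for what lies behind that theorem, but the step you candidly leave open---exhibiting more than $m$ pairwise inequivalent indecomposable full-rank matrices for all large $n$---is exactly its nontrivial content, so modulo that citation your proof coincides with the paper's.
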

We do not give a full proof of the proposition here, but the claim essentially
follows from a careful analysis of the subsemigroup of the transition semigroup
generated by the transitions corresponding to the letters in $A$. Applying
Devadze's Theorem~\cite{Devadze68danbssr,Konieczny11sf} to the matrix
representation of this subsemigroup shows that it is proper, \ie{}smaller than
the full transition semigroup itself.

\section{From Weakly Recognizing Morphisms to Automata}
\label{sec:hom-aut}

The well-known construction to convert weakly recognizing morphisms to finite
automata with a Büchi-acceptance condition has quadratic blow-up~\cite{pp04:short}.
We show that this is optimal up to a constant factor.

\begin{theorem}
  Let $A = \os{a, b}$, let $n \ge 3$, and let $L = \bigunion_{i=1}^n {(b a^i b
  A^*)}^\omega$.
  Then there exists a semigroup with $4n+3$ elements that weakly recognizes $L$
  and every finite automaton Büchi-accepting $L$ has at least $n(n+1) / 2$
  states.
\end{theorem}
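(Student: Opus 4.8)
The plan begins by establishing a normal form for $L$: an infinite word $\alpha$ lies in $L$ if and only if $\alpha$ begins with $b a^i b$ for some $i \in \{1, \dots, n\}$ and this same factor $b a^i b$ occurs infinitely often, where $i$ is forced to be the length of the leading block of $a$'s. One direction is immediate from the definition of $(b a^i b A^*)^\omega$; for the other, given infinitely many occurrences of $b a^i b$ one greedily cuts $\alpha$ into blocks, each beginning at an occurrence and of length at least $i+2$. The crucial qualitative consequence is that the \emph{initial} block is rigid while the rest of the word is forgiving: altering finitely many later blocks never changes membership. Both bounds are organised around this asymmetry.

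\textbf{Upper bound.} I would define $h \colon A^+ \to S$ so that $h(w)$ records only the truncated length of the leading block of $a$'s in $w$, whether a first factor $b a^i b$ has already been completed together with its mode $i$, and whether $w$ is still pending (of the form $b a^m$ or $a^m$ with no decisive second $b$ yet). For each $i$ the class $[e_i]$ of words beginning with $b a^i b$ is idempotent and right-absorbing, so $(e_i, e_i)$ is a linked pair and $[e_i][e_i]^\omega = (b a^i b A^*)^\omega$; the union over $i$ is exactly $L$, which is weak recognition. The one point needing care is well-definedness as a morphism: a leading block of $a$'s can merge with an $a$-block presented on the left, so this truncated leading length must be carried as an explicit component rather than folded into a single ``starts with $a$'' class. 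Counting the resulting classes and making the evident identifications (leading lengths $0, \dots, n$ and ``$>n$''; the modes $e_1, \dots, e_n$; an invalid-head class; and the pending $a$-classes) collapses $S$ to $4n+3$ elements.

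\textbf{Lower bound.} Because altering the middle of a word does not affect membership, words split into only \bigO{n} classes under their prefixes, so no ordinary Büchi fooling set can exceed \bigO{n} pairs; instead I would count states forced inside accepting \emph{loops}. Writing $\beta_i = b (a^i b)^\omega \in L$, fix for each $i$ an accepting run of $\autA$ on $\beta_i$ and extract a loop $C_i$ based at a block boundary that reads $(a^i b)^{d}$ for some $d \ge 1$ and passes through a final state. I would then prove (1) $\abs{C_i} \ge i$ and (2) $C_i \intersect C_j = \emptyset$ for $i \ne j$; since the $C_i$ are then disjoint subsets of the state set, this yields at least $\sum_{i=1}^{n} i = n(n+1)/2$ states. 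Property (2) is a gluing argument: if some state lay on both $C_i$ and $C_j$, I would run into it along the $\beta_i$-prefix (so the word starts $b a^i b$) and then cycle $C_j$ forever (so its tail recurs $b a^j b$); the run is accepting, yet its word begins with $b a^i b$ and recurs only $b a^j b$ with $i \ne j$, hence is not in $L$, a contradiction.

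The main obstacle is property (1), precisely because of the forgiving tail: pumping one over-long block keeps the word in $L$. The remedy is to pump \emph{every} block of the loop at once. If $\abs{C_i} < i$, then while reading each of the $d$ blocks $a^i$ the pigeonhole principle forces a repeated state, hence an $a$-sub-loop of some length $\ell_r \ge 1$ in the $r$-th block; traversing the $r$-th sub-loop $\mathrm{lcm}(\ell_1,\dots,\ell_d)/\ell_r$ times enlarges every block to the \emph{common} length $i' = i + \mathrm{lcm}(\ell_1,\dots,\ell_d) > i$. The modified run is still accepting, but its word starts $b a^i b$ (the prefix reaching $C_i$ is untouched) while its entire tail recurs $b a^{i'} b$ with $i' \ne i$, so it fails the normal form and lies outside $L$, contradicting acceptance. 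This establishes $\abs{C_i} \ge i$ and completes the bound; note that the Büchi argument here is self-contained and does not pass through Proposition~\ref{prop:transfer}.
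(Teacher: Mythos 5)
Your proposal is correct and takes essentially the same route as the paper: the same $4n+3$-element semigroup tracking the leading $a$-block length and the first completed $b a^i b$-factor for the upper bound, and the same lower-bound scheme of fixing, for each $i$, an accepting run on a $(b a^i b)$-type word, proving its set of recurrent states has size at least $i$, and proving pairwise disjointness of these sets by the same cut-and-paste contradiction, summing to $n(n+1)/2$. Your simultaneous lcm-pumping of all $a$-blocks is precisely the step the paper compresses into ``apply the standard pumping argument'' (and your loops $C_i$ play the role of the paper's $\inf(r_i) \intersect Q$), so the differences are only in presentation.
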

\begin{proof}
  We first define a semigroup $S = \set{a^i, a^i b, b a^i, b a^i b}{1 \le i \le
  n} \union \os{b, bb, 0}$ by the multiplication $0 \cdot s = s \cdot 0 = 0$
  for all $s \in S$ and
  \begin{align*}
    b^\ell a^i b^r \cdot b^m a^j b^s & =
    \begin{cases}
      bb & \text{if~} i = j = 0 \\
      b^\ell a^{i+j} b^s & \text{if~} r = m = 0 \text{~and~} 1 \le i + j \le n \\
      0 & \text{if~} r = m = 0 \text{~and~} i + j > n \\
      b^\ell a^i b & \text{otherwise}
    \end{cases}
  \end{align*}
  where $\ell, m, r, s \in \os{0, 1}$ and $i, j \in \os{0, \dots, n}$.
  The morphism $h \colon A^+ \to S$ defined by $h(a) = a$ and $h(b) = b$ now
  weakly recognizes $L$ since $L$ is the union of all sets $[b a^i b]{[b a^i
  b]}^\omega$ with $1 \le i \le n$.

  Now assume that we are given a finite automaton $\autA = (Q, A, \delta, I,
  F)$ such that $\LBA(\autA) = L$.
  For each $i \in \os{1, \dots, n}$, we consider the word $\alpha_i = {(b a^i
  b)}^\omega$ and let $r_i$ be an accepting run of $\alpha_i$.
  We first show that for $i \ne j$, we have $\inf(r_i) \intersect Q \intersect
  \inf(r_j) = \emptyset$, and then prove that $\abs{\inf(r_i) \intersect Q} \ge
  i$ for $1 \le i \le n$.
  Together, this yields
  \begin{equation*}
    \abs{Q} \ge \sum_{i=1}^n \abs{\inf(r_i) \intersect Q} \ge \sum_{i=1}^n i = n(n+1) / 2.
  \end{equation*}

  Let $i, j \in \os{1, \dots, n}$ such that $i \ne j$.
  We assume for the sake of contradiction that there exists a state $q \in Q$
  with $q \in \inf(r_i)$ and $q \in \inf(r_j)$. Let $u \in b a^i b A^*$ be a
  prefix of $\alpha_i$ such that $r_i$ visits $q$ after reading $u$. Let $v \in
  A^*$ be a factor of $\alpha_j$ such that there exists a finite run labeled by
  $v$, which starts and ends in $q$, visits at least one final state and such
  that $v^\omega = {(b a^j b)}^\omega$ or $v^\omega = a^k b {(b a^j b)}^\omega$
  for some $k \in \os{0, \dots, j}$. Obviously, we then have $uv^\omega \in
  \LBA(\autA)$ but $uv^\omega \not\in L$, a contradiction.

  For the second part of the proof, assume again for the sake of contradiction
  that $\abs{\inf(r_i) \intersect Q} < i$ for some accepting run $r_i$ of
  $\alpha_i$. Then inside each $b a^i b$-factor, a state is visited twice and
  we can apply the standard pumping argument to show that a word in $A^\omega
  \setminus \LBA(\autA)$ has an accepting run as well.
\end{proof}

\section{Complementation}

To date, the best construction for complementing weakly recognizing morphisms
is the so-called \emph{strong expansion}~\cite{pp04:short}. Given a morphism $h
\colon A^+ \to S$, the strong expansion of $h$ is a morphism $g \colon A^+ \to
T$ which strongly recognizes all languages weakly recognized by $h$. If $S$ has
$n$ elements, the size of $T$ is $2^{n^2}$.
The purpose of this section is to give a lower bound for complementation.
At the same time, the established bound also serves as a lower bound for the
conversion of weak recognition to strong recognition since each morphism
strongly recognizing a language also strongly recognizes its complement.

Complementing weakly recognizing morphisms is easy in the case of $\gJ$-trivial
semigroups since each language weakly recognized by a $\gJ$-trivial semigroup
$S$ is already strongly recognized by $S$, \ie{}there is no need the compute the
strong expansion if the $\gJ$-classes of the input are trivial already. In
order to establish a lower bound, we thus consider the class of simple
semigroups, which is dual to $\gJ$-trivial semigroups in the sense that simple
semigroups consist of a single $\Jeq$-class only.

\begin{proposition}
  Let $n \ge 1$ be an arbitrary integer and let $A = \os{a_1, a_2, \dots,
  a_n}$.
  The language $L = \bigcup_{i=1}^n {(a_i A^*)}^\omega$ is weakly recognized by
  a simple semigroup with $n$ elements and every semigroup weakly recognizing
  $A^\omega \setminus L$ has at least $n \ms 2^{n-1}$ elements.
\end{proposition}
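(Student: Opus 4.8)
The plan is to handle the two assertions separately: an explicit simple semigroup for the weak recognition of $L$, and a family of $n \ms 2^{n-1}$ words separated via Theorem~\ref{thm:lower} for the lower bound on $A^\omega \setminus L$.

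For the first assertion I would take $S$ to be the \emph{left-zero} semigroup on $n$ elements, that is $S = \os{e_1, \dots, e_n}$ with multiplication $e_i e_j = e_i$, together with the morphism $h \colon A^+ \to S$ given by $h(a_i) = e_i$. This $S$ is simple: since $e_j e_i = e_j$ and $e_i e_j = e_i$ for all $i, j$, any two elements are $\Jeq$-equivalent, so $S$ is a single $\Jeq$-class. Because $h(w) = e_i$ precisely when the first letter of $w$ is $a_i$, we get $[e_i] = a_i A^*$ and hence $[e_i]{[e_i]}^\omega = {(a_i A^*)}^\omega$. As each $(e_i, e_i)$ is a linked pair (both $e_i e_i = e_i$ and $e_i^2 = e_i$ hold trivially), the language $L = \bigunion_{i=1}^n {(a_i A^*)}^\omega$ is a union of sets $[e_i]{[e_i]}^\omega$ and is therefore weakly recognized by $S$.

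For the lower bound, the first step is to rewrite the complement. Since $\alpha \in {(a_i A^*)}^\omega$ holds exactly when $\alpha$ begins with $a_i$ and $a_i \in \inf(\alpha)$, a word lies in $L$ iff its first letter occurs infinitely often; hence $A^\omega \setminus L$ consists of those $\alpha$ whose first letter occurs only finitely often. I would then associate to each finite word $w \in A^+$ the pair given by its first letter together with its set of occurring letters, and fix for every pair $(c, T)$ with $c \in T \subseteq A$ a representative $w_{c,T}$ realizing it (for instance $c$ followed by all letters of $T$). There are $\sum_{T \subseteq A} \abs{T} = n \ms 2^{n-1}$ such pairs, so it suffices to show that any morphism $h$ weakly recognizing $A^\omega \setminus L$ separates all these representatives.

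The heart of the argument is thus to apply Theorem~\ref{thm:lower} to two words $u, v$ with distinct pairs. If their letter sets differ, pick a letter $c$ in the symmetric difference, say $c$ occurring in $v$ but not in $u$, and invoke Property~\ref{enum:lowerbbb} with $x = c$ and $y = \varepsilon$: then $c u^\omega$ has first letter $c \notin \inf = $ letters$(u)$ and so lies in $A^\omega \setminus L$, whereas $c{(uv)}^\omega$ and $c{(vu)}^\omega$ both keep $c$ infinitely often and hence lie in $L$; this forces $h(u) \ne h(v)$. If instead $u$ and $v$ share the same letter set but have different first letters $c_u \ne c_v$, invoke Property~\ref{enum:loweraaa} with $x = y = \varepsilon$ and $z = c_v$: the word $u c_v^\omega$ begins with $c_u \ne c_v$, so its first letter lies outside $\inf = \os{c_v}$ and it belongs to $A^\omega \setminus L$, while $v c_v^\omega$ begins with $c_v \in \inf$ and so lies in $L$, again yielding $h(u) \ne h(v)$. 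Separating all $n \ms 2^{n-1}$ representatives gives the bound.

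The point I expect to require the most care is matching the two defects of $w$ (its first letter versus its letter set) to the two properties of Theorem~\ref{thm:lower}. Property~\ref{enum:lowerbbb} is usable only with a nonempty $x$: when $x = \varepsilon$ the word $(uy)^\omega$ always retains its own first letter infinitely often and can never enter $A^\omega\setminus L$, so the first hypothesis of that property fails. Dually, Property~\ref{enum:loweraaa} is usable only with $x = \varepsilon$, since a nonempty $x$ pins down the first letter of $xuyz^\omega$ independently of $u$ and $v$. Once this dichotomy is identified, the two cases combine to distinguish exactly the pairs $(c, T)$ with $c \in T$, and a final check that $\sum_{T \subseteq A} \abs{T} = n \ms 2^{n-1}$ completes the bound.
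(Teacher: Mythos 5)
Your proof is correct and follows essentially the same route as the paper: the same left-zero semigroup for weak recognition of $L$, the same family of $n \ms 2^{n-1}$ words indexed by (first letter, set of letters), and the same two-case application of Theorem~\ref{thm:lower} (Property~\ref{enum:loweraaa} with $x = y = \varepsilon$, $z$ the first letter of $v$ when the first letters differ; Property~\ref{enum:lowerbbb} with $x$ a letter in the symmetric difference and $y = \varepsilon$ when the letter sets differ). The only differences are cosmetic: the paper parametrizes by pairs $(b, B)$ with $B \subseteq A \setminus \os{b}$ rather than $c \in T$, and it does not spell out your (correct and clarifying) observation that $\alpha \in L$ iff the first letter of $\alpha$ occurs infinitely often.
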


\begin{proof}
  The alphabet $A$ can be extended to a semigroup by defining an associative
  operation $a \circ b = a$ for all $a, b \in A$. Now, the morphism $h \colon
  A^+ \to (A, \circ)$ given by $h(a) = a$ for all $a \in A$ weakly recognizes
  $L$. The semigroup $(A, \circ)$ contains $\abs{A} = n$ elements and it is
  simple because we have $a \Leq b$ for all $a, b \in A$.

  Now, let $h \colon A^+ \to S$ be a morphism weakly recognizing $A^\omega
  \setminus L$.  For a letter $b \in A$ and a subset $B \subseteq A \setminus
  \os{b}$, let $u_{b,B}$ be the uniquely defined word $b a_{i_1} a_{i_2} \cdots
  a_{i_\ell}$ such that $i_1 < i_2 < \cdots < i_\ell$ and $\os{a_{i_1},
  a_{i_2}, \dots, a_{i_\ell}} = B$.
  Consider two letters $b, c \in A$ and subsets $B \subseteq A \setminus
  \os{b}$, $C \subseteq A \setminus \os{c}$. If $b \ne c$, we have $u_{b,B}
  c^\omega \not\in L$ and $u_{c,C} c^\omega \in L$.
  If $B \ne C$ we may assume, without loss of generality, that there exists a
  letter $a \in B \setminus C$. In this case, we have $a {u_{c,C}}^\omega
  \not\in L$ but $a {(u_{b,B} u_{c,C})}^\omega \in L$ and $a {(u_{c,C}
  u_{b,B})}^\omega \in L$.
  By Theorem~\ref{thm:lower}, this suffices to conclude that $h(u_{b,B}) \ne
  h(u_{c,C})$ whenever $b \ne c$ or $B \ne C$ and therefore, $S$ contains at
  least $\abs{A} \ms 2^{\abs{A}-1} = n \ms 2^{n-1}$ elements.
\end{proof}

Rather surprisingly, the established lower bound turns out to be asymptotically
tight in the case of simple semigroups. More generally, for simple semigroups,
the construction of the strong expansion can be improved such that only $n \ms
2^n$ elements are needed.
This will be proved in the remainder of this section.

We start with a morphism $h \colon A^+ \to S$ onto a simple semigroup with $n =
\abs{S}$ elements.
Since $S$ is simple, there exists a surjective mapping $\gamma \colon S \to G$
onto a finite group $G$ that becomes a bijection when restricted to a single
$\gH$-class.
Therefore, the mapping $\pi \colon ({S} / {\gR}) \times G \times
({S} / {\gL}) \to S$ with $\pi^{-1}(s) = (R_s, \gamma(s), L_s)$ for all
$s \in S$ is well-defined and bijective.
Moreover, for $s, t \in S$, we write $R_t \cdot s$ to denote the element
$\pi(R_t, \gamma(s), L_s)$.

Let $T = \set{(s, X)}{s \in S, X \subseteq S}$ and let $g \colon A^+ \to T$ be
defined by
\begin{equation*}
  g(u) = (h(u), \set{R_{h(q)} \cdot h(p)}{p, q \in A^+, pq = u})
\end{equation*}
for all $u \in A^+$. The set $T$ can be extended to a semigroup by defining an
associative multiplication
\begin{equation*}
  (s, X) \cdot (t, Y) = (st, X \union \os{R_t \cdot s} \union \hat{Y})
\end{equation*}
where $\hat{Y}$ denotes the set $\set{\pi(R_y, \gamma(s (R_t \cdot y)), L_y)}{y
\in Y}$. Under this extension, the mapping $g$ becomes a morphism.

The following three technical lemmas capture important properties of the
construction and are needed for the main proof.

\begin{lemma}\label{lem:Rts}
  Let $s, t \in S$. Then $R_t \cdot s$ is the unique element $x$ such that $x
  \Req t$, $x \Leq s$ and $\gamma(x) = \gamma(s)$ or, equivalently, the unique
  element $x$ such that $x \Heq ts$ and $\gamma(x) = \gamma(s)$.
\end{lemma}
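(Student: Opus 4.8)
The plan is to read off both characterizations directly from the bijectivity of $\pi$, supplemented by the structural relations of a finite simple semigroup. First I would simply unpack the definition. Writing $x = R_t \cdot s = \pi(R_t, \gamma(s), L_s)$ and applying $\pi^{-1}$ gives $(R_x, \gamma(x), L_x) = (R_t, \gamma(s), L_s)$; that is, $x \Req t$, $x \Leq s$ and $\gamma(x) = \gamma(s)$. Conversely, any element $y$ meeting these three conditions satisfies $\pi^{-1}(y) = (R_t, \gamma(s), L_s) = \pi^{-1}(x)$, so $y = x$ by injectivity of $\pi$. This establishes the first characterization, uniqueness included, in one stroke.

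To obtain the equivalent formulation in terms of $ts$, I would invoke the relation $u \Req uv \Leq v$ that holds in every finite simple semigroup (recorded in the preliminaries), specialized to $u = t$ and $v = s$. This yields $t \Req ts$ and $ts \Leq s$, hence $R_{ts} = R_t$ and $L_{ts} = L_s$; in particular $ts$ itself lies in the $\gH$-class $R_t \intersect L_s$. Since $\Heq$ is the conjunction of $\Req$ and $\Leq$, an element $x$ satisfies $x \Heq ts$ exactly when $R_x = R_{ts} = R_t$ and $L_x = L_{ts} = L_s$, i.e. exactly when $x \Req t$ and $x \Leq s$. Adjoining the condition $\gamma(x) = \gamma(s)$, the clause ``$x \Heq ts$ and $\gamma(x) = \gamma(s)$'' becomes logically equivalent to the three conditions of the first characterization, and the uniqueness transfers verbatim.

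The argument is mostly bookkeeping once these facts are assembled; the one step that deserves care is the identification of $ts$ as an element of $R_t \intersect L_s$, since it is precisely this observation that certifies that the $\gH$-class named in the second characterization is nonempty and coincides with the $\gH$-class $R_t \intersect L_s$ selected by $\pi$. The remaining ingredient — that the single equation $\gamma(x) = \gamma(s)$ pins down a unique element inside a fixed $\gH$-class — is exactly the hypothesis that $\gamma$ restricts to a bijection on each $\gH$-class, so no further work is needed there.
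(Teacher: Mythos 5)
Your proof is correct and takes essentially the same route as the paper's: the first characterization is read off from $\pi^{-1}(R_t \cdot s) = (R_t, \gamma(s), L_s)$ together with bijectivity of $\pi$, and the second is reduced to the first via the simple-semigroup relation $t \Req ts \Leq s$, just as the paper uses the equivalences $x \Req t \Leftrightarrow x \Req ts$ and $x \Leq s \Leftrightarrow x \Leq ts$. The only difference is presentational, in that you spell out the intermediate identifications $R_{ts} = R_t$ and $L_{ts} = L_s$ that the paper leaves implicit.
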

\begin{proof}
  Let $x = R_t \cdot s$. We have $(R_x, \gamma(x), L_x) = \pi^{-1}(x) =
  \pi^{-1}(R_t \cdot s) = (R_t, \gamma(s), L_s)$. Together with the fact that
  $\pi$ is bijective, this establishes the first claim. For the second claim,
  note that since $S$ is simple, $x \Req t$ is equivalent to $x \Req ts$ and $x
  \Leq s$ is equivalent to $x \Leq ts$.
\end{proof}

\begin{lemma}\label{lem:fact}
  Let $u \in A^+$ with $g(u) = (s, X)$ and let $x \in S$. Then $x \in X \union
  \os{s}$ if and only if there exists a factorization $u = pq$ with $p \in A^+$
  and $q \in A^*$ such that $x \Heq h(qp)$ and $\gamma(x) = \gamma(h(p))$.
\end{lemma}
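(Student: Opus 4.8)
The plan is to prove the biconditional by splitting on whether the factorization is proper ($q \in A^+$) or trivial ($q = \varepsilon$). The proper factorizations will account for membership $x \in X$, while the trivial factorization $u = pq$ with $p = u$ and $q = \varepsilon$ will account for $x = s$; admitting $q \in A^*$ in the statement is exactly what merges these two cases into the single condition $x \in X \union \os{s}$.

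First I would unfold the definition of $g$: by construction, $X = \set{R_{h(q)} \cdot h(p)}{p, q \in A^+, pq = u}$. Fix a proper factorization $u = pq$ with $p, q \in A^+$. Applying the second characterization in Lemma~\ref{lem:Rts} with $t = h(q)$ and $s = h(p)$, the element $R_{h(q)} \cdot h(p)$ is precisely the unique $x$ satisfying $x \Heq h(q) h(p)$ and $\gamma(x) = \gamma(h(p))$. Since $h$ is a morphism, $h(q) h(p) = h(qp)$, so this condition reads $x \Heq h(qp)$ and $\gamma(x) = \gamma(h(p))$. Hence $x \in X$ holds if and only if some proper factorization $u = pq$ satisfies exactly the stated condition, with the factors appearing in the reversed order $qp$ inside $h$.

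For the term $x = s$, I would take the trivial factorization $p = u$, $q = \varepsilon$, so that $h(qp) = h(p) = h(u) = s$. The stated condition then becomes $x \Heq s$ and $\gamma(x) = \gamma(s)$, which forces $x = s$: from $x \Heq s$ we get $R_x = R_s$ and $L_x = L_s$, whence $\pi^{-1}(x) = (R_s, \gamma(s), L_s) = \pi^{-1}(s)$, and bijectivity of $\pi$ gives $x = s$. Conversely, $x = s$ trivially satisfies the condition via this factorization. Combining both cases yields the equivalence for $x \in X \union \os{s}$.

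There is no real obstacle here; the lemma is essentially a restatement of the definition of $g$ read through the characterization in Lemma~\ref{lem:Rts}. The only points demanding care are the bookkeeping of which of $p$, $q$ plays which role, since the associated element depends on $h(qp)$ with the factors in reverse order, and the deliberate inclusion of the degenerate factorization $q = \varepsilon$, which must be admitted precisely so that the extra element $s$ is captured alongside $X$.
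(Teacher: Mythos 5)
Your proposal is correct and follows essentially the same route as the paper's proof: split off the trivial factorization $p = u$, $q = \varepsilon$ (which accounts for $x = s$), and reduce the proper factorizations $p, q \in A^+$ to the characterization of $R_{h(q)} \cdot h(p)$ in Lemma~\ref{lem:Rts} combined with the definition of $g$. The only difference is that you spell out, via bijectivity of $\pi$, the degenerate case the paper dismisses as obvious.
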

\begin{proof}
  Obviously, we have $x = s$ if and only if there exists a factorization $u =
  pq$ with $p = u$ and $q = \varepsilon$ satisfying the properties described
  above. Thus, it suffices to consider factorizations where $p, q \in A^+$.
  By Lemma~\ref{lem:Rts}, such a factorization exists if and only if $x =
  R_{h(q)} \cdot h(p)$ which is, in turn, equivalent to $x \in X$ by the
  definition of $g$.
\end{proof}

\begin{lemma}\label{lem:complement-correctness}
  Let $(t, f)$ be a linked pair of $S$, let $\big((s, X), (e, Y)\big)$ be a
  linked pair of $T$ and let $\alpha \in [(s, X)]_g {[(e, Y)]}_g^\omega$.
  Then $\alpha \in [t]_h{[f]}_h^\omega$ if and only if $tq = s$, $pq = e$, $qp
  = f$, $R_q \cdot t \in X$ and $R_q \cdot p \in Y$ for some $p, q \in S$.
\end{lemma}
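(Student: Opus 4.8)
The plan is to prove both directions of the biconditional by connecting the membership condition $\alpha \in [t]_h{[f]}_h^\omega$ to the explicit description of the second components $X$ and $Y$ furnished by Lemma~\ref{lem:fact}. The key observation is that $\alpha$ lies in $[(s,X)]_g{[(e,Y)]}_g^\omega$, so by Lemma~\ref{lem:technical} applied in the semigroup $T$ there is a factorization witnessing this membership; projecting via the first coordinate (which is just $h$, since $g(u) = (h(u), \dots)$) gives a parallel factorization of $\alpha$ in $S$. I would first fix such a factorization $\alpha = u_0 u_1 u_2 \cdots$ with $g(u_0) = (s,X)$ and $g(u_i) = (e,Y)$ for $i \ge 1$, so that $h(u_0) = s$ and $h(u_i) = e$.

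\textbf{The forward direction.}
Assume $\alpha \in [t]_h{[f]}_h^\omega$. By Lemma~\ref{lem:technical} (now in $S$, using the morphism $h$), there is a factorization of some power of a block of $\alpha$ witnessing that $\alpha$ decomposes as a $t$-prefix followed by $f$-blocks. The goal is to extract $p, q \in S$ with $tq = s$, $pq = e$, $qp = f$, and then to verify the two membership conditions $R_q \cdot t \in X$ and $R_q \cdot p \in Y$. The identities $tq=s$, $pq=e$, $qp=f$ should come from comparing the two factorizations (the $g$-factorization and the $h$-factorization) of the same word $\alpha$ and reading off how the cut points relate: $q$ is the image under $h$ of the segment between a $g$-cut and the next $h$-cut, and $p$ is the image of the complementary piece. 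For the membership claims I would invoke Lemma~\ref{lem:fact}: the element $R_q \cdot t$ arises from a factorization of $u_0$ of the form $p_0 q_0$ with $h(q_0 p_0) \Heq R_q \cdot t$ and matching $\gamma$-value, and similarly $R_q \cdot p$ arises from an internal factorization of one of the $e$-blocks $u_i$. Here Lemma~\ref{lem:Rts} is the crucial bridge, since it characterizes $R_q \cdot t$ both by its $\Req$, $\Leq$, $\gamma$ data and by the $\gH$-class of $tq$, which is exactly the form Lemma~\ref{lem:fact} requires.

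\textbf{The converse direction and the main obstacle.}
For the converse, assume $p,q \in S$ satisfy the four conditions together with $R_q \cdot t \in X$ and $R_q \cdot p \in Y$. Using Lemma~\ref{lem:fact}, the conditions $R_q \cdot t \in X$ and $R_q \cdot p \in Y$ translate into concrete factorizations of $u_0$ and of an $e$-block, and I would reassemble $\alpha$ along these new cut points to exhibit it as an element of $[t]_h{[f]}_h^\omega$, checking via Lemma~\ref{lem:technical} that the rearranged prefix maps to $t$ and each rearranged block maps to $f$ (using $tq = s$, $pq = e$, $qp = f$ to recompute the products). I expect the main obstacle to be bookkeeping the relationship between the three distinct cut structures that appear: the $g$-factorization of $\alpha$, the internal factorizations supplied by Lemma~\ref{lem:fact}, and the desired $h$-factorization witnessing membership in $[t]_h{[f]}_h^\omega$. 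Making these align requires careful use of Lemma~\ref{lem:inverse} (cancellation in a simple semigroup) to pin down $p$ and $q$ uniquely once their $\gR$- and $\gL$-data are fixed, so that the element $R_q \cdot t$ read off from one factorization genuinely matches the one appearing in $X$ rather than merely being $\Jeq$-related. The repeated interplay between the abstract coordinates $(R_\bullet, \gamma(\bullet), L_\bullet)$ and the combinatorial factorizations is the delicate part; once the correspondence is set up, each individual verification is a short computation in the simple semigroup $S$.
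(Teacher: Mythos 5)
Your proposal matches the paper's proof essentially step for step: the forward direction merges the $g$-factorization of $\alpha$ with an $h$-factorization into an interleaved form $\alpha = u v_1 v_1' v_2 v_2' \cdots$, reads off $p = h(v_1)$ and $q = h(v_1')$ to get $tq = s$, $pq = e$, $qp = f$, and obtains $R_q \cdot t \in X$ and $R_q \cdot p \in Y$ from the definition of $g$ (your Lemma~\ref{lem:fact}/Lemma~\ref{lem:Rts} bridge is exactly how the paper packages this), while the converse uses Lemma~\ref{lem:fact} to place cuts and then Lemma~\ref{lem:Rts} together with Lemma~\ref{lem:inverse} to pin down $h(v_i) = p$ and $h(v_i') = q$ before recomputing the products. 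The one citation to fix: the factorizations witnessing $\alpha \in [(s,X)]_g[(e,Y)]_g^\omega$ and $\alpha \in [t]_h[f]_h^\omega$ come directly from the definition of the sets $[s][e]^\omega$, not from Lemma~\ref{lem:technical}, which applies only to ultimately periodic words $uv^\omega$, whereas $\alpha$ here is an arbitrary infinite word.
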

\begin{proof}
  For the direction from left to right, let $\alpha = u v_1 v_1' v_2 v_2'
  \cdots$ such that $g(u) = (s, X)$, $g(v_i v_i') = (e, Y)$, $h(u v_1) = t$ and
  $h(v_i' v_{i+1}) = f$ for all $i \ge 1$. Furthermore, we assume without loss
  of generality that $v_i, v_i' \ne \varepsilon$ for all $i \ge 1$ and that
  $h(v_1) = h(v_2)$. We set $p = h(v_1) = h(v_2)$ and $q = h(v_1')$. Now, $tq =
  h(u v_1 v_1') = se = s$, $pq = h(v_1 v_1') = e$ and $qp = h(v_1' v_2) = f$.
  Moreover, by the definition of $g$, we have $R_q \cdot t = R_{h(v_1')} \cdot
  h(u v_1) \in X$ and $R_q \cdot p = R_{h(v_1')} \cdot h(v_1) \in Y$.

  For the converse implication, note that by Lemma~\ref{lem:fact}, there exists
  a factorization $\alpha = u v_1 v_1' v_2 v_2' \cdots$ such that $h(u) = s$,
  $h(v_i v_i') = e$, $R_{h(v_1')} \cdot h(u v_1) = R_q \cdot t$ and
  $R_{h(v_i')} \cdot h(v_i) = R_q \cdot p$ for all $i \ge 1$.
  Since $S$ is simple, $h(v_i) \Req h(v_i v_i') = e \Req p$ and $h(v_i) \Leq
  (R_{h(v_i')} \cdot h(v_i)) = (R_q \cdot p) \Leq p$ for all $i \ge 1$.
  Furthermore, $\gamma(h(v_i)) = \gamma(R_{h(v_i')} \cdot h(v_i)) =
  \gamma(R_q \cdot p) = \gamma(p)$. Together, this yields $h(v_i) = p$ by
  Lemma~\ref{lem:Rts}.
  Similarly, we have $h(v_i') \Req (R_{h(v_i')} \cdot h(v_i)) = (R_q \cdot
  p) \Req q$ and thus, $p \ms h(v_i') = h(v_i v_i') = pq$ implies $h(v_i') = q$
  for all $i \ge 1$ by Lemma~\ref{lem:inverse}.
  This shows that $h(u v_1) = sp = tqp = tf = t$ and $h(v_i' v_{i+1}) = qp =
  f$. We conclude that $\alpha \in [t]{[f]}^\omega$.
\end{proof}

\begin{theorem}\label{thm:complement-simple}
  Let $h \colon A^+ \to S$ be a morphism onto a simple semigroup of size $n =
  \abs{S}$ that weakly recognizes a language $L \subseteq A^\omega$. Then there
  exists a morphism $g \colon A^+ \to T$ to a semigroup of size $\abs{T} = n
  \ms 2^n$ that strongly recognizes $L$.
\end{theorem}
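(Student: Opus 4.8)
The plan is to read off all three assertions from the material already assembled. The size bound $\abs{T} = n \ms 2^n$ is immediate: $T$ consists of all pairs $(s, X)$ with $s \in S$ and $X \subseteq S$, so $\abs{T} = \abs{S} \cdot 2^{\abs{S}} = n \ms 2^n$, and $g$ is a morphism, as recorded in the construction preceding the theorem. Thus the only real task is to verify that $g$ strongly recognizes $L$. Since $h$ weakly recognizes $L$, I fix a family of linked pairs $(t_i, f_i)$ of $S$ with $L = \bigunion_i [t_i]_h{[f_i]}_h^\omega$, and I exploit that membership of a word in each piece is controlled entirely by its $g$-type.

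First I would reduce strong recognition to the case of linked pairs of $T$. Given arbitrary $\sigma, \eta \in T$, choose a power $p \ge 1$ with $\epsilon = \eta^p$ idempotent. Regrouping an infinite product witnessing $\alpha \in [\sigma]_g{[\eta]}_g^\omega$ into blocks of length $p$ and absorbing the first block into the prefix shows $[\sigma]_g{[\eta]}_g^\omega \subseteq [\sigma\epsilon]_g{[\epsilon]}_g^\omega$, where $(\sigma\epsilon, \epsilon)$ is a linked pair since $(\sigma\epsilon)\epsilon = \sigma\epsilon$ and $\epsilon^2 = \epsilon$. Hence, if I prove that every set $[(s,X)]_g{[(e,Y)]}_g^\omega$ attached to a linked pair $\big((s,X),(e,Y)\big)$ of $T$ is either contained in $L$ or disjoint from $L$, then $[\sigma]_g{[\eta]}_g^\omega \intersect L \ne \emptyset$ forces $[\sigma\epsilon]_g{[\epsilon]}_g^\omega \subseteq L$ and therefore $[\sigma]_g{[\eta]}_g^\omega \subseteq L$, which is exactly strong recognition.

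The core step is the dichotomy for a fixed linked pair $\big((s,X),(e,Y)\big)$ of $T$. For every linked pair $(t_i, f_i)$ occurring in $L$, Lemma~\ref{lem:complement-correctness} states that an $\alpha \in [(s,X)]_g{[(e,Y)]}_g^\omega$ lies in $[t_i]_h{[f_i]}_h^\omega$ if and only if there are $p, q \in S$ with $t_i q = s$, $pq = e$, $qp = f_i$, $R_q \cdot t_i \in X$ and $R_q \cdot p \in Y$. The point is that this condition does not mention $\alpha$ at all: it depends only on $s, X, e, Y$ and the pair $(t_i, f_i)$. Consequently $[(s,X)]_g{[(e,Y)]}_g^\omega$ is either entirely contained in $[t_i]_h{[f_i]}_h^\omega$ or entirely disjoint from it, and since $L$ is the union of these pieces, intersecting with $L$ likewise yields either the whole set or the empty set. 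This is the desired alternative, and it completes the argument.

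I expect the genuine obstacle to lie not in the argument above but in what precedes it: establishing that the multiplication on $T$ is associative and that $g$ is compatible with it (the bookkeeping hidden in $\hat{Y}$ and in the operation $R_t \cdot s$), together with the proof of Lemma~\ref{lem:complement-correctness} itself. The latter is where simplicity of $S$ is really used, since recovering the canonical factors $h(v_i) = p$ and $h(v_i') = q$ from an arbitrary factorization requires Lemmas~\ref{lem:inverse} and~\ref{lem:Rts} to cancel on the correct side. Once that machinery is in place, the passage from the linked-pair dichotomy to full strong recognition is purely formal.
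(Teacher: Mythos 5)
Your proposal is correct and takes essentially the same approach as the paper: the paper's proof likewise fixes a linked pair $\big((s,X),(e,Y)\big)$ of $T$ and derives the containment/disjointness dichotomy directly from Lemma~\ref{lem:complement-correctness}, exploiting exactly your observation that the criterion there depends only on the types $s,X,e,Y,t,f$ and not on the word $\alpha$. Your explicit reduction from arbitrary pairs $(\sigma,\eta)$ of $T$ to linked pairs via the idempotent power $\epsilon=\eta^{p}$ is a standard step that the paper leaves implicit (its proof considers only linked pairs of $T$), so it adds a welcome bit of rigor without changing the argument.
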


\begin{proof}
  The construction we use is the one described in the introduction of this
  section.
  Consider a linked pair $((s, X), (e, Y))$ of $T$ as well as two infinite
  words $\alpha, \beta \in [(s, X)]{[(e, Y)]}^\omega$. If $\alpha \in L$, there
  exists a linked pair $(t, f)$ of $S$ such that $\alpha \in [t]{[f]}^\omega
  \subseteq L$. Lemma~\ref{lem:complement-correctness} immediately yields
  $\beta \in [t]{[f]}^\omega \subseteq L$, thereby showing that $g$ strongly
  recognizes $L$.
\end{proof}

\section{Discussion and Open Problems}

We presented lower bound techniques and gave tight bounds for the conversion
between finite automata and weakly recognizing morphisms.
One can use techniques similar to those described in Section~\ref{sec:aut-hom}
to obtain a $3^{n^2}$ lower bound for the conversion of finite automata with
transition-based Büchi acceptance to strongly recognizing morphisms. However,
with the usual state-based Büchi acceptance criterion, the analysis becomes
much more involved and it is not clear whether the $3^{n^2}$ upper bound can be
reached. Analogously, there is no straightforward adaptation of the conversion
of weakly recognizing morphisms into B\"uchi automata in
Section~\ref{sec:hom-aut} to strongly recognizing morphisms. It would be
interesting to see whether the quadratic lower bound also holds in this
setting.

Another open problem is to close the remaining gaps between the upper and the
lower bounds. This is particularly true for the complexity of complementation
and the conversion of weakly recognizing morphisms to strong recognition. We
showed that there is an exponential lower bound and gave an asymptotically
optimal construction for simple semigroups which was a first candidate for
semigroups that are hard to complement. It is easy to adapt this construction
to families of semigroups where the size of each $\gJ$-class is bounded by a
constant. However, for the general case, the gap between $n \ms 2^{n-1}$ and
$2^{n^2}$ remains.

Beyond that, another direction for future research is to investigate whether
any of the bounds can be improved by considering the size of the accepting set,
\ie{}the number of linked pairs used to describe a language.

\bibliographystyle{abbrv}

\begin{thebibliography}{10}

\bibitem{arn85}
A.~Arnold.
\newblock A syntactic congruence for rational $\omega$-languages.
\newblock {\em Theoretical Computer Science}, 39:333--335, 1985.

\bibitem{Buc60}
J.~R. B{\"u}chi.
\newblock Weak second-order arithmetic and finite automata.
\newblock {\em Zeitschrift f{\"u}r mathematische Logik und Grundlagen der
  Mathematik}, 6:66--92, 1960.

\bibitem{Chrobak86}
M.~Chrobak.
\newblock Finite automata and unary languages.
\newblock {\em Theoretical Computer Science}, 47(2):149--158, Nov. 1986.

\bibitem{Devadze68danbssr}
H.~M. Devadze.
\newblock Generating sets of the semigroup of all binary relations in a finite
  set.
\newblock {\em Doklady Akademii Nauk BSSR}, 12:765--768, 1968.

\bibitem{FleischerKufleitner2015fsttcs:short}
L.~Fleischer and M.~Kufleitner.
\newblock {Efficient Algorithms for Morphisms over Omega-Regular Languages}.
\newblock In {\em FSTTCS 2015, Proceedings}, volume~45 of {\em LIPIcs}, pages
  112--124. Dagstuhl Publishing, 2015.

\bibitem{HolzerKoenig2004tcs}
M.~Holzer and B.~K\"{o}nig.
\newblock On deterministic finite automata and syntactic monoid size.
\newblock {\em Theoretical Computer Science}, 327(3):319--347, Nov. 2004.

\bibitem{KimRoush78jmp}
K.~H. Kim and F.~W. Roush.
\newblock Two-generator semigroups of binary relations.
\newblock {\em Journal of Mathematical Psychology}, 17(3):236--246, 1978.

\bibitem{Konieczny11sf}
J.~Konieczny.
\newblock A proof of {D}evadze's theorem on generators of the semigroup of
  boolean matrices.
\newblock {\em Semigroup Forum}, 83(2):281--288, 2011.

\bibitem{pecu86stacs:short}
J.~P{\'{e}}cuchet.
\newblock Vari{\'{e}}t{\'{e}}s de semis groupes et mots infinis.
\newblock In {\em STACS 1986, Proceedings}, pages 180--191, 1986.

\bibitem{pp04:short}
D.~Perrin and J.-{\'E}. Pin.
\newblock {\em Infinite words}, volume 141 of {\em Pure and Applied
  Mathematics}.
\newblock Elsevier, 2004.

\bibitem{pin86:short}
J.-{\'E}. Pin.
\newblock {\em {Varieties of Formal Languages}}.
\newblock North Oxford Academic, 1986.

\bibitem{SakodaSipser78stoc:short}
W.~J. Sakoda and M.~Sipser.
\newblock Nondeterminism and the size of two way finite automata.
\newblock In {\em STOC 1978, Proceedings}, pages 275--286. ACM Press, 1978.

\bibitem{tho90handbook:short}
W.~Thomas.
\newblock Automata on infinite objects.
\newblock In {\em Handbook of Theoretical Computer Science}, chapter~4, pages
  133--191. Elsevier, 1990.

\bibitem{Yan08lmcs}
Q.~Yan.
\newblock Lower bounds for complementation of omega-automata via the full
  automata technique.
\newblock {\em Logical Methods in Computer Science}, 4(1), 2008.

\end{thebibliography}

\end{document}